\documentclass[11pt]{article}

\usepackage[margin=1in]{geometry}
\usepackage{lmodern}
\usepackage{microtype}
\usepackage{amsmath,amssymb,amsthm,mathtools}
\usepackage{enumitem}
\usepackage{xcolor}
\usepackage{lineno}

\usepackage{tikz}
\usepackage[ruled,vlined]{algorithm2e}
\usepackage{tcolorbox}
\tcbuselibrary{skins}

\usepackage{thmtools}
\usepackage{thm-restate}
\usepackage{hyperref}
\usepackage{cleveref}

\makeatletter
\@ifundefined{newcounteralias}{}{%
  \renewcommand\thmt@autorefsetup{%
    \@xa\def\csname\thmt@envname autorefname\@xa\endcsname
      \@xa{\thmt@thmname}}%
}
\makeatother

\colorlet{mix}{red!50!brown}

\hypersetup{colorlinks={true},linkcolor={mix},citecolor=mix,pdfborder={0 0 0}}

\setlist[itemize]{leftmargin=18pt,topsep=2pt,parsep=2pt}
\setlist[enumerate]{leftmargin=18pt,topsep=2pt,parsep=2pt}

\theoremstyle{plain}
\newtheorem{theorem}{Theorem}[section]
\newtheorem{lemma}[theorem]{Lemma}
\newtheorem{definition}[theorem]{Definition}

\newtheorem{corollary}[theorem]{Corollary}
\theoremstyle{remark}

\title{A $2$-Approximation for Directed Feedback Vertex Set in Locally Semicomplete and Quasi-Transitive Digraphs}

\author{
  Sounak Modak\\
  \small The Institute of Mathematical Sciences, HBNI, Chennai, India\\
  \small \texttt{sounakmodak@gmail.com}
  }
 \date{}

\DontPrintSemicolon
\SetAlgoSkip{smallskip}
\makeatletter
\renewcommand{\@maketitle}{%
  \newpage\null\vskip 1em
  \begin{center}%
    \let\footnote\thanks
    {\LARGE\@title\par}%
    \ifx\@author\@empty\else
      \vskip 1em
      {\large\lineskip .5em
       \begin{tabular}[t]{c}\@author\end{tabular}\par}%
    \fi
    \ifx\@date\@empty\else\vskip .5em{\large\@date\par}\fi
  \end{center}%
  \par\vskip .5em}
\makeatother

\begin{document}
\maketitle

\begin{abstract}
A \emph{directed feedback vertex set} of a digraph is a set of vertices
whose removal destroys all directed cycles. The \textsc{Directed Feedback
Vertex Set} (\textsc{DFVS}) problem asks for such a set of minimum
cardinality or minimum total weight. Although general \textsc{DFVS}
admits no constant-factor approximation under the {Unique Games Conjecture}, tournaments admit a randomized factor-$2$ approximation due to Lokshtanov et al. [SODA'20]. 
We extend this guarantee to two broader classes of
structured digraphs, both of which also contain sparse digraphs.

Our first and main result is a randomized polynomial-time
factor-$2$ approximation for weighted \textsc{DFVS} on
\emph{locally semicomplete digraphs} (\textsf{LSD}s), a class that
strictly generalizes semicomplete digraphs and tournaments. To the best
of our knowledge, this is the first non-trivial constant-factor approximation for
\textsc{DFVS} on \textsf{LSD}s, even in the unweighted setting.

Our second result is a randomized polynomial-time factor-$2$
approximation for weighted \textsc{DFVS} on \emph{quasi-transitive
digraphs}, improving the recent deterministic $9/4$-approximation
of Ghorbani and Mnich~[ICALP'26].
The algorithm follows from a simple recursive application of our
composition framework.

The
factor $2$ is optimal under the {Unique Games Conjecture}, since
tournaments are subclass of \textsf{LSD}s as well as quasi-transitive digraphs.

\nocite{DBLP:conf/icalp/GhorbaniM26}
\nocite{DBLP:journals/jacm/ChenLLOR08}
\nocite{DBLP:journals/jgt/Bang-Jensen90}
\nocite{DBLP:conf/soda/LokshtanovMMPP020}
\nocite{DBLP:conf/icalp/EsmerK25}
\end{abstract}

\clearpage

\section{Introduction}
\label{sec:intro}

A \emph{feedback vertex set} of a digraph $D$ is a set
$S\subseteq V(D)$ such that $D-S$ is acyclic. In the
\textsc{Directed Feedback Vertex Set} (\textsc{DFVS}) problem, the input
additionally contains a nonnegative vertex-weight function
$w:V(D)\to\mathbb{R}_{\ge0}$, and the objective is to find a feedback
vertex set of minimum total weight. Directed Feedback Vertex Set is one
of Karp's original $21$ NP-complete problems; its undirected counterpart
is also NP-complete
\cite{DBLP:conf/coco/Karp72,DBLP:books/fm/GareyJ79}.

Feedback vertex set problems have been studied extensively from the
perspectives of approximation
\cite{DBLP:journals/siamdm/BafnaBF99,
DBLP:journals/algorithmica/EvenNSS98},
parameterized complexity
\cite{DBLP:journals/jacm/ChenLLOR08,
DBLP:conf/focs/CyganNPPRW11,
DBLP:journals/ipl/KociumakaP14},
exact exponential-time algorithms
\cite{DBLP:journals/jacm/FominGLS19,
DBLP:conf/ictcs/Razgon07,
DBLP:journals/jco/XiaoN15},
and structural graph theory
\cite{DBLP:journals/combinatorica/ReedRST96}.

For general digraphs, the best known polynomial-time approximation ratio
for \textsc{DFVS} is
$\mathcal{O}(\log n\log\log n)$
\cite{DBLP:journals/algorithmica/EvenNSS98}, and no constant-factor
approximation exists under the Unique Games Conjecture
\cite{DBLP:journals/toc/GuruswamiL16}. This has motivated the study of
structured digraph classes on which stronger guarantees are possible.
In particular, weighted \textsc{DFVS} admits a randomized
factor-$2$ approximation on tournaments
\cite{DBLP:conf/soda/LokshtanovMMPP020}, and factor-$2$ approximations
are also known for bipartite tournaments
\cite{DBLP:journals/tcs/Zuylen11}. Further results include an
$\mathcal{O}(g)$-approximation for digraphs of genus $g$
\cite{DBLP:conf/approx/Sun24} and a $2\alpha$-approximation for
digraphs of independence number at most $\alpha$
\cite{DBLP:conf/latin/GuptaMSS24}.

\paragraph{{Locally semicomplete digraphs.}}
In this paper, our main focus is the class of
\emph{locally semicomplete digraphs} (\textsf{LSD}s), introduced by
Bang-Jensen~\cite{DBLP:journals/jgt/Bang-Jensen90}. A digraph is locally
semicomplete if the in- and out-neighborhood of every vertex induce
semicomplete digraphs. The class strictly generalizes semicomplete
digraphs and, in particular, tournaments.

Locally semicomplete digraphs retain their local density without requiring
global semicompleteness. They may contain induced directed cycles of
unbounded length, so hitting directed triangles alone does not solve
the problem. 

The structural theory of locally semicomplete digraphs was developed in
a sequence of works by Bang-Jensen and coauthors
\cite{DBLP:journals/dm/Bang-Jensen92,
DBLP:journals/dm/Bang-JensenGGV97}. A central outcome of these works is a classification theorem for LSDs~\cite{DBLP:journals/dm/Bang-JensenGGV97}. The underlying undirected graphs of locally semicomplete
digraphs are exactly the proper-circular arc graphs~\cite{DBLP:journals/jgt/Skrien82} which have many practical
applications.

A substantial literature has also investigated paths, connectivity,
hamiltonicity and decomposition problems on
locally semicomplete digraphs
\cite{DBLP:journals/jgt/Bang-JensenCM17,
DBLP:journals/jgt/Bang-JensenGV96,
DBLP:journals/jgt/Bang-JensenGY03,
DBLP:journals/jct/Bang-JensenH12,
DBLP:journals/jgt/Bang-JensenH14,
DBLP:journals/jgt/Bang-JensenM14,
DBLP:journals/jgt/Guo96,
DBLP:journals/jgt/GuoV94,
DBLP:journals/jgt/GuoV96,
DBLP:journals/ajc/GuptaGK0R10}.

\paragraph{{Quasi-transitive digraphs.}}
We also consider \emph{quasi-transitive digraphs}, in which
$xy,yz\in A(D)$ and $x\ne z$ imply that $x$ and $z$ are adjacent.
This class generalizes tournaments; its strong members admit a
canonical decomposition with a strong semicomplete quotient and
smaller quasi-transitive modules~\cite{DBLP:journals/jgt/Bang-JensenH95}.
It is also closely related to comparability graphs: an undirected graph
admits a quasi-transitive orientation if and only if it is a
comparability graph~\cite{DBLP:journals/jgt/Bang-JensenH95}.

We refer to the survey and textbook of Bang-Jensen and Gutin
\cite{DBLP:journals/jgt/Bang-JensenG98,
DBLP:books/daglib/0022205}
for a broader account of the class of locally semicomplete and quasi-transitive digraphs. In particular, both locally semicomplete and quasi-transitive digraphs can be very dense (such as tournaments or
semicomplete digraphs), or very sparse (certain orientations of paths). From an algorithmic perspective, Bang-Jensen, Maddaloni, and Saurabh
studied parameterized
complexity and kernelization of feedback set problems  on generalizations of tournaments
 including locally semicomplete and quasi-transitive digraphs~\cite{DBLP:journals/algorithmica/Bang-JensenMS16}.

\subsection*{Our Results}
\label{subsec:our-results}

We obtain the following results.

\medskip
\noindent\textbf{1. A tight $2$-approximation for locally semicomplete
digraphs.}
Our main result is a randomized polynomial-time factor-$2$
approximation for weighted \textsc{DFVS} on locally semicomplete
digraphs.

\begin{restatable}{thm}{polyapprox}
\label{thm:lsd-2approx-probhalf}
There exists a randomized algorithm $\mathcal L$ which, given a locally
semicomplete digraph $D$ on $n\ge1$ vertices and nonnegative vertex weights
$w:V(D)\to\mathbb{R}_{\ge0}$, always outputs a feedback vertex set
$X$ of $D$ and satisfies
$\Pr\!\left[
w(X)\le2\operatorname{OPT}(D,w)
\right]\ge\frac12$.
More generally, for every $\delta\in(0,1)$, the algorithm achieves
success probability at least $1-\delta$ in time
$\mathcal O(n^{18}\log(2n/\delta))$.
\end{restatable}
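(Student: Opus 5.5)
We will reduce to the tournament algorithm of Lokshtanov et al.\ via the structural classification of locally semicomplete digraphs \cite{DBLP:journals/dm/Bang-JensenGGV97}, together with a composition framework for ``modular'' pieces. The guarantee is per-strong-component additive: an optimal solution restricted to a strong component $C$ is optimal for $D[C]$, and the union of feedback vertex sets of the strong components is a feedback vertex set of $D$. So we first compute the strong components in linear time and handle each strong LSD separately; a factor-$2$ bound on each component sums to a factor-$2$ bound overall. We boost success probability by repetition: run the whole procedure $\lceil\log_2(2n/\delta)\rceil$ times and keep the lightest output. Since every run outputs a feasible set, the minimum is feasible, and it fails only if all runs fail. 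A union bound over at most $n$ components then gives success probability $1-\delta$.

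For a strong LSD $D$, the classification gives three cases. In the first case $D$ is semicomplete, and we apply the weighted semicomplete (tournament) $2$-approximation. Digons are handled by the standard reduction: a digon forces one endpoint into every solution, and this can be encoded by LP/iterative-rounding weights as in the Lokshtanov et al.\ analysis, or by reducing the weight of both endpoints by their minimum (local ratio), which preserves factor $2$. In the second case $D$ is round decomposable, i.e.\ $D=R[S_1,\dots,S_k]$ with $R$ a round LSD and each $S_i$ a strong semicomplete module. In the third case $D$ is one of the non-round-decomposable ``circular'' families; these have bounded structural description and are handled as in the second case after guessing a constant number of vertices.

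The main work, and the expected obstacle, is the round decomposable case. The plan has three steps.

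\textbf{Step 1.} Any acyclic subgraph of $D$ either keeps vertices from a set of modules $S_i$ whose indices induce an acyclic subdigraph of $R$, or "breaks" the round cycle somewhere. Inside each surviving module, the kept vertices must themselves be acyclic.

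\textbf{Step 2.} We guess (by enumeration over $O(n^c)$ choices) a cut point of the round ordering: a module index together with a bounded number of boundary vertices where the optimal solution's remaining digraph linearizes. Round LSDs have circular orderings in which $N^+(v)$ is an interval following $v$. Once the ordering is cut open, the residual digraph is a ``locally transitive'' path-like structure.

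\textbf{Step 3.} On this path-like structure, cycles are contained in windows of consecutive modules whose union is semicomplete. We then run a dynamic program over the ordering, using the semicomplete $2$-approximation as an oracle on each window.

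The difficulty is combining oracle answers without losing more than factor $2$. Here I would try the composition framework: replace each module by a single weighted vertex whose weight is the cost to delete it entirely, keeping the module's internal $2$-approximate FVS cost separately. I would then argue that OPT decomposes as (internal FVS costs of kept modules) plus (cost of hitting the quotient cycles). Both parts are approximated within $2$, since the quotient is again an LSD of simpler form and can be recursed on. The number of nested guesses (a cut point and boundary vertices, each $O(n)$ over several levels) accounts for the $n^{18}$ running time.
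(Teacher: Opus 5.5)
You have the SCC reduction and the semicomplete case right (they match the paper), but there is a genuine gap exactly where the main difficulty lies: strong LSDs that are neither semicomplete nor round-decomposable. You assert that these have a ``bounded structural description'' and can be ``handled as in the second case after guessing a constant number of vertices.'' You give no reason why $O(1)$ guesses would leave a round-decomposable remainder. These digraphs are by definition not round-decomposable, and their pieces are semicomplete of unbounded size.

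The natural guess is a vertex $v$ kept by some optimum $X^\star$; one exists because leaving a single vertex is always feasible, so trying all $n$ pivots suffices. Even then, every directed $2$-cycle and triangle through $v$ must still be hit. Choosing the hitting vertices is itself a weighted vertex-cover problem, and its cost must be charged within factor $2$.

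The paper's missing idea is exactly this. Encode those cycles as a graph $G_v$, with a loop at $x$ for each digon $vx$ and an edge $xy$ for each triangle on $\{v,x,y\}$. Local ratio then gives a cover $T$ and a split $w=w^\triangle+w'$ with $w'(T)=0$ and $w^\triangle(Y)\le2w^\triangle(X^\star)$ for all $Y$. In $D-T$, the SCC of $v$ is not semicomplete: a shortest cycle through $v$ has length at least $4$, so any arc between $v$ and $x_2$ would give a shortcut or a triangle. By vertex-pancyclicity it is therefore round-decomposable or trivial. Meanwhile $\alpha(U(D))\le2$ and the SCC structure of non-strong LSDs make every other SCC semicomplete. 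Solving all these with $w'$ and adding the two bounds against the same $X^\star$ gives factor $2$. Without this or a substitute, the theorem is unproved on a whole class of inputs.

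The round-decomposable case also needs repair. Steps~1--3 (cutting the round order, a dynamic program over overlapping windows with an approximate oracle) are unnecessary and, as sketched, would not obviously preserve factor $2$. The composition idea in your last paragraph is the right one. However, the quotient is a round local tournament and cannot usefully be ``recursed on'', since it is its own round decomposition with singleton modules. It can instead be solved exactly: a lightest in-neighborhood is an optimal FVS. Cutting the round order just before $v$ makes every backward arc start in $N^-(v)$. Conversely, a source $s$ of $R-Z$, for an optimum $Z$ with $R-Z$ nonempty, has $N^-(s)\subseteq Z$.

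You must also prove that the two approximation factors combine by their maximum, and here the quotient weight matters. With $\Delta(r_i)=w(V(S_i))-w(F_i)$, a $\beta$-approximate quotient gives $\max\{2,\beta\}$. With the full deletion cost $w(V(S_i))$, a $2$-approximate quotient (as your last sentence suggests) only yields factor $3$.

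Finally, repeating the entire procedure and keeping the lightest output does not amplify as claimed. The analysis guarantees a run only when all of its semicomplete calls succeed, and there are up to $n^2$ of them once all pivots are counted. That event can have probability far below $1/2$. Instead, amplify each semicomplete call with $\lceil\log_2(2n^2/\delta)\rceil$ independent repetitions and take a union bound over all calls.
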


\medskip
\noindent\textbf{2. A $2$-approximation for quasi-transitive digraphs.}
Ghorbani and Mnich recently obtained a deterministic
$9/4$-approximation for weighted \textsc{DFVS} on quasi-transitive
digraphs~\cite{DBLP:conf/icalp/GhorbaniM26}. We show that the composition
framework developed for the round-decomposable LSD of our locally
semicomplete algorithm also applies recursively to the canonical
decomposition of quasi-transitive digraphs. This improves the
approximation factor from $9/4$ to $2$, at the cost of randomization.
More precisely, we obtain the following guarantee.

\begin{restatable}{thm}{qtapprox}
\label{thm:qt-main}
For every quasi-transitive digraph $D$ on $N\ge1$ vertices, every
nonnegative vertex-weight function
$w:V(D)\to\mathbb{R}_{\ge0}$, and every $\delta\in(0,1)$, there is a
randomized algorithm that always outputs a feedback vertex set $X$ of
$D$ and satisfies
$\Pr\!\left[
w(X)\le2\operatorname{OPT}(D,w)
\right]\ge1-\delta$.
The running time is $\mathcal O(N^{17}\log(2N/\delta))$.
\end{restatable}

Since tournaments form a subclass of both locally semicomplete and quasi-transitive digraphs, the
factor $2$ is optimal under the Unique Games Conjecture
\cite{DBLP:conf/soda/LokshtanovMMPP020}.

\paragraph{{Organization.}}
Section~\ref{sec:preliminaries} introduces the required notation and the
structural decompositions of locally semicomplete and quasi-transitive
digraphs. Section~\ref{sec:toolkit} develops the structural and approximation tools, including a composition framework for combining approximation guarantees on digraph compositions, and Section~\ref{sec:tec-overview} gives a technical
overview of the algorithms. Section~\ref{sec:lsd} presents the
factor-$2$ approximation for LSDs: Section~\ref{sec:round} handles
round-decomposable inputs, Section~\ref{sec:nonround} develops the
pivot reduction for the remaining strong inputs, and
Section~\ref{subsec:alg-corr} combines these routines and proves
Theorem~\ref{thm:lsd-2approx-probhalf}.
Section~\ref{sec:qt-2approx} proves the quasi-transitive result,
Theorem~\ref{thm:qt-main}.

\section{Preliminaries}
\label{sec:preliminaries}

We use standard graph-theoretic notation; see, e.g.,~\cite{DBLP:books/daglib/0022205} for any undefined terms.
All digraphs in this paper are finite and simple (no loops and no parallel arcs), but we allow antiparallel pairs of arcs
(i.e.\ directed $2$-cycles). For a digraph $D$, we write $V(D)$ and $A(D)$ for its vertex and arc set, respectively.
For $X\subseteq V(D)$, we denote by $D-X$ the induced subdigraph on $V(D)\setminus X$, and for $S\subseteq V(D)$
we write $D[S]$ for the subdigraph induced by $S$. The underlying undirected graph of $D$ is denoted by $U(D)$. A digraph $D$ is \emph{connected} if its underlying graph $U(D)$ is
connected, and it is \emph{strong} if every vertex is reachable from
every other vertex.

For an arc $(u,v)\in A(D)$, we call $u$ an \emph{in-neighbor} of $v$ and $v$ an \emph{out-neighbor} of $u$.
The in- and out-neighborhoods of a vertex $x$ are
$N^-_D(x):=\{\,u\in V(D)\mid (u,x)\in A(D)\,\}$,
 and $N^+_D(x):=\{\,v\in V(D)\mid (x,v)\in A(D)\,\}$,
and its in- and out-degrees are $d^-_D(x):=|N^-_D(x)|$ and $d^+_D(x):=|N^+_D(x)|$.
We will omit the subscript $D$ when the digraph is clear from context.

A vertex-weighted digraph has weights $w:V(D)\to\mathbb R_{\ge0}$;
write $w(S)=\sum_{v\in S}w(v)$.
For a nonempty finite family of vertex sets, a member is
\emph{lightest} (respectively, \emph{heaviest}) if it has minimum
(respectively, maximum) total weight among the members of the family.
Ties are broken arbitrarily. Restrictions of weights to induced
subdigraphs are implicit when unambiguous.

An \emph{independent set} of a digraph $D$ is a vertex set that induces no arcs (equivalently, it is an independent set in
$U(D)$). The \emph{independence number} of $D$ is $\alpha(U(D))$, the maximum size of an independent set in $U(D)$.
We use standard definitions of directed paths and directed cycles. A \emph{directed $3$-cycle} in a digraph $D$ is a directed cycle on three distinct vertices,
i.e.\ a set $\{a,b,c\}\subseteq V(D)$ such that (after relabeling) $(a,b),(b,c),(c,a)\in A(D)$.
We will also refer to a directed $3$-cycle as a \emph{directed triangle}, or simply a \emph{triangle}. A digraph is \emph{acyclic} if it contains no directed cycle. For a digraph $D$, a subset $C\subseteq V(D)$ is a \emph{strongly connected component} (SCC) of $D$ if
for every $u,v\in C$ there exists a directed path from $u$ to $v$ and from $v$ to $u$,
and $C$ is maximal with respect to this property.

\paragraph{{Feedback vertex set.}}
An FVS of $D$ is a set $X$ such that $D-X$ is acyclic. Write
$\operatorname{OPT}(D,w)$ for the minimum weight of such a set;
weighted \textsc{DFVS} asks for an FVS attaining this minimum.
A randomized factor-$f$ approximation always returns an FVS and
has weight at most $f\cdot\operatorname{OPT}(D,w)$ with probability at
least $1/2$.

\begin{definition}[Semicomplete and locally semicomplete~\cite{DBLP:journals/jgt/Bang-Jensen90}]
A digraph $D$ is \emph{semicomplete} if for every pair of distinct vertices $u,v$,
at least one of $uv$ or $vu$ is present.
A digraph $D$ is \emph{locally semicomplete} if for every vertex $v$,
both $D[N^+(v)]$ and $D[N^-(v)]$ are semicomplete.
\end{definition}

Note that induced subdigraphs of a locally semicomplete digraph are again locally semicomplete. 

We use the following tournament approximation, with the improved
running time from its journal version
\cite{DBLP:journals/talg/LokshtanovMMPPS21}.

\begin{theorem}[\cite{DBLP:conf/soda/LokshtanovMMPP020,DBLP:journals/talg/LokshtanovMMPPS21}]
\label{thm:tfvs-talg1}
There exists a randomized algorithm $\mathcal{T}$ which, given a tournament $D$ on $n$ vertices
and a nonnegative weight function $w:V(D)\to \mathbb{R}_{\ge 0}$, runs in time $\mathcal{O}(n^{17})$
and outputs a feedback vertex set $S$ of $D$. With probability at least $1/2$, $S$ is a $2$-approximate solution of $(D,w)$.
\end{theorem}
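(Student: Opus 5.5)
This theorem is imported from Lokshtanov et al.\ and is used as a black box, so the plan below only reconstructs the idea of a proof. It is not an independent argument; in particular I have not recovered how the factor $2$ is secured (see the last paragraph). The approach combines two ingredients. The first is the local-ratio technique applied to directed triangles. The second is a pivot step that guesses a vertex outside an optimal solution and splits the tournament around it.

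\textbf{Step 1: local ratio on triangles.} A tournament is acyclic iff it has no directed triangle. Hence $S$ is a feedback vertex set of $(D,w)$ iff $S$ hits every directed triangle. Repeatedly pick a triangle $\{a,b,c\}$, subtract $\varepsilon=\min\{w(a),w(b),w(c)\}$ from each of its three vertices, and move any vertex whose weight drops to $0$ into the solution. Each such round contributes $3\varepsilon$ to our cost but at least $\varepsilon$ to $\operatorname{OPT}$, which gives only a factor of $3$. So this step cannot be used naively. Instead I would use it only to normalize the instance, so that the residual weights are concentrated on a structure where a $2$-approximate certificate exists.

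\textbf{Step 2: the pivot.} Fix an optimal solution $S^*$ and a vertex $v\notin S^*$. Every arc from $N^+(v)$ to $N^-(v)$ closes a triangle through $v$. Therefore $S^*$ restricted to $N^+(v)\cup N^-(v)$ is a vertex cover of the bipartite graph $H_v$ of these backward arcs, and it also contains feedback vertex sets of $D[N^-(v)]$ and $D[N^+(v)]$. Conversely, any vertex cover of $H_v$, together with feedback vertex sets of the two sides, is a feedback vertex set of $D$. The reason is that $D-S$ then decomposes as $N^-(v)\to v\to N^+(v)$ with no arcs going back.
\begin{itemize}
\item The bipartite vertex-cover part is solved exactly via König's theorem / min cut.
\item The two sides are solved recursively.
\end{itemize}
The algorithm guesses $v$, either by enumerating all $n$ choices or by sampling $v$ with probability tied to the residual weights. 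Success probability $1/2$ comes from the probability of hitting a good $v$, and we boost it by repetition.

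\textbf{Main obstacle: charging to get factor $2$.} A naive bound charges $\operatorname{OPT}$ once for the vertex cover of $H_v$ and once more for the recursive sides. The difficulty is that the same optimal vertices may pay for both, which already drives the factor towards $3$. I would try to assign each optimal vertex to at most two roles, via an LP-type bound in which the recursion is charged against $\operatorname{OPT}$ of the sub-tournaments while the vertex cover is charged against the complementary portion. I expect this accounting, and the choice of the sampling distribution for $v$ that makes it hold with constant probability, to be the genuinely hard part; it is exactly where the cited paper's contribution lies. The $\mathcal O(n^{17})$ running time should then follow from the polynomial number of guesses at each recursion level, times the max-flow cost, times the repetition count.
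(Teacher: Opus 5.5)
The paper does not prove this statement. It imports it, with the journal version's $\mathcal O(n^{17})$ bound, from Lokshtanov et al.\ and uses it as a black box, so the citation is its entire justification. Your proposal, as you say, is not a proof. It also places the difficulty in the wrong spot.

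\textbf{The factor $2$ for a fixed pivot is easy.} For a fixed pivot $v\notin S^*$ you do not need an exact K\"onig cover. Instead:
\begin{itemize}
\item Run local ratio on the edges of $H_v$, as the paper does with $G_v$ in Section~\ref{sec:nonround} via Lemma~\ref{lem:vc-lr-decomposition}.
\item This gives a cover $T$ and a split $w=w^\triangle+w'$ with $w'(T)=0$ and $w^\triangle(Y)\le2w^\triangle(Z)$ for every $Y$ and every cover $Z$.
\item Then recurse on the two sides of $D-T$ with the residual weights $w'$.
\end{itemize}
If the recursive outputs $F_\pm$ are $2$-approximate, then
\[
w(T\cup F_-\cup F_+)=w^\triangle(T\cup F_-\cup F_+)+w'(F_-\cup F_+)\le2w^\triangle(S^*)+2w'(S^*)=2w(S^*),
\]
exactly as in Lemma~\ref{lem:pivot-candidate}. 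The two charges use complementary layers of the weight, so nothing is paid twice. Your factor $3$ arises only because an exact cover $C$ is charged $w(C)\le w(S^*)$ on top of $2\operatorname{OPT}$ of the sides, under the \emph{same} weights.

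\textbf{The genuine gap is polynomial time with success probability $1/2$.} Your last sentence takes this for granted. The bound above needs a pivot outside an optimum of the residual instance at \emph{every} node of the recursion tree, not just at the root.
\begin{itemize}
\item Enumerating all pivots at every node and keeping the lightest candidate gives running time $t(n)\ge\sum_v\bigl(t(a_v)+t(b_v)\bigr)$. Here $a_v,b_v$ are the sizes of the two sides for pivot $v$, with $a_v+b_v<n$. This is super-polynomial in general, already $n^{\Omega(\log n)}$ when the splits are balanced.
\item Sampling the pivot instead requires up to $n$ guesses to be good simultaneously. Even a constant success probability per guess then leaves only $2^{-\Omega(n)}$ overall. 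Repeating the whole algorithm cannot restore probability $1/2$ in polynomial time.
\end{itemize}
So ``polynomially many guesses per level, times max-flow, times repetitions'' does not give $\mathcal O(n^{17})$, or any polynomial bound. Taming this branching is precisely the contribution of the cited work, and your plan supplies no mechanism for it. What the plan actually supports is a factor-$2$ algorithm with super-polynomial running time.
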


We use the following extension of
Theorem~\ref{thm:tfvs-talg1} to semicomplete digraphs. Apply the
standard edge-by-edge local-ratio reduction to the graph of directed
$2$-cycles, delete the zero-residual-weight vertices, and run the
tournament routine on the remaining tournament with the residual
weights. 
The amplified version is given in
Appendix~\ref{sc:amplify}.

\begin{theorem}[{\sc SEM-FVS}]
\label{thm:semfvs-2approx}
There exists a randomized algorithm $\mathcal{S}$ which, given a semicomplete digraph $H$ on $n$ vertices
and a nonnegative weight function $w:V(H)\to \mathbb{R}_{\ge 0}$, runs in time $\mathcal{O}(n^{17})$ and outputs
a feedback vertex set $X$ of $H$.
With probability at least $1/2$, the set $X$ is a $2$-approximate solution for $(H,w)$.
\end{theorem}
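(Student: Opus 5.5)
The plan is to reduce to Theorem~\ref{thm:tfvs-talg1} by a local-ratio treatment of the directed $2$-cycles. Let $G$ be the undirected graph on $V(H)$ whose edges are the pairs $\{u,v\}$ with both $uv,vu\in A(H)$. Every feedback vertex set of $H$ is a vertex cover of $G$, since each edge of $G$ is a directed $2$-cycle. We maintain residual weights $w'$, initialized to $w$. We scan the edges $\{u,v\}$ of $G$ one at a time; for each, we set $\varepsilon=\min(w'(u),w'(v))$ and subtract $\varepsilon$ from both $w'(u)$ and $w'(v)$. Decompose $w=w_1+w_2$, where $w_2$ is the final residual $w'$ and $w_1=\sum_e \varepsilon_e\,\mathbf 1_e$ is the sum of the edge-indicator weights. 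After the scan, every edge of $G$ has an endpoint of residual weight $0$. Let $Z$ be the set of vertices with $w_2(z)=0$. Then $V(H)\setminus Z$ is independent in $G$, so $H-Z$ has no $2$-cycles and is a tournament.

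Run $\mathcal T$ on $(H-Z,w_2)$ to obtain a feedback vertex set $S$ of $H-Z$, and output $X=Z\cup S$. $X$ is clearly a feedback vertex set of $H$. For the cost bound, we use that $w_2(X)=w_2(S)$, since $w_2$ vanishes on $Z$.

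\textbf{Bound for the $w_1$ part.} For each edge $e$ with $\varepsilon_e>0$, the edge-weight $\varepsilon_e\mathbf 1_e$ contributes at most $2\varepsilon_e$ to the weight of any set. Conversely, every feedback vertex set of $H$ contains at least one endpoint of $e$, so it pays at least $\varepsilon_e$. Hence $w_1(X)\le 2\sum_e\varepsilon_e\le 2\operatorname{OPT}(H,w_1)$.

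\textbf{Bound for the $w_2$ part.} Let $F$ be any feedback vertex set of $H$. Then $F\setminus Z$ is a feedback vertex set of $H-Z$, and $w_2(F\setminus Z)=w_2(F)$. Therefore $\operatorname{OPT}(H-Z,w_2)\le\operatorname{OPT}(H,w_2)$. With probability at least $1/2$, Theorem~\ref{thm:tfvs-talg1} then gives
\[
w_2(X)=w_2(S)\le 2\operatorname{OPT}(H-Z,w_2)\le 2\operatorname{OPT}(H,w_2).
\]

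\textbf{Combining.} Since $\operatorname{OPT}$ is superadditive over the weight decomposition, i.e.\ $\operatorname{OPT}(H,w_1)+\operatorname{OPT}(H,w_2)\le\operatorname{OPT}(H,w)$, we get $w(X)\le2\operatorname{OPT}(H,w)$ on the success event.

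\textbf{Running time.} The local-ratio preprocessing takes $\mathcal O(n^2)$ time, and the call to $\mathcal T$ takes $\mathcal O(n^{17})$. The main point to check carefully is that $H-Z$ is genuinely a tournament, i.e.\ that no $2$-cycle survives. This holds because each scanned edge zeroes the residual weight of at least one of its endpoints, and residual weights never increase afterwards, so every edge of $G$ ends with an endpoint in $Z$.
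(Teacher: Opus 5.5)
Your proposal is correct and follows the paper's own argument: edge-by-edge local ratio on the graph of directed $2$-cycles, deletion of the zero-residual-weight vertices, and the tournament routine $\mathcal T$ on the remaining tournament with residual weights. The paper only sketches this in one sentence; you also supply the analysis it leaves implicit, namely the $2\sum_e\varepsilon_e$ bound on the charged part, $\operatorname{OPT}(H-Z,w_2)\le\operatorname{OPT}(H,w_2)$, and superadditivity of $\operatorname{OPT}$ over $w=w_1+w_2$.
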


Following are the standard definitions~\cite{DBLP:books/daglib/0022205} that we will use throughout the paper.

\begin{definition}[Local tournament]
A digraph $D$ is a \emph{local tournament} if $D$ is oriented, and for every vertex $v\in V(D)$,
the induced subdigraphs $D[N^+(v)]$ and $D[N^-(v)]$ are tournaments. 
\end{definition}

\begin{definition}[Round local tournament]
A local tournament $R$ is \emph{round} if its vertices admit a cyclic
ordering $r_1,\ldots,r_m$ such that, for every $i$,
$N^+(r_i)=\{r_{i+1},\ldots,r_{i+d^+(r_i)}\}$
and
$N^-(r_i)=\{r_{i-d^-(r_i)},\ldots,r_{i-1}\}$,
where indices are taken modulo $m$.
Such an ordering is called a \emph{round labeling}.
\end{definition}

\begin{definition}[Composition]
\label{def:composition}
Let $R$ be a digraph with vertex set
$V(R)=\{r_1,\ldots,r_m\}$, and let
$H_1,\ldots,H_m$ be pairwise vertex-disjoint nonempty digraphs.
The \emph{composition}
$D=R[H_1,\ldots,H_m]$
has vertex set
    $V(D)=\dot\bigcup_{i=1}^m V(H_i)$
and arc set
    $A(D)=
    \bigcup_{i=1}^m A(H_i)
    \;\cup\;
    \{xy:\ x\in V(H_i),\ y\in V(H_j),\ i\ne j,
             \ r_ir_j\in A(R)\}$.
Thus, for $i\ne j$ and $x\in V(H_i),y\in V(H_j)$,
    $xy\in A(D)\quad\Longleftrightarrow\quad r_ir_j\in A(R)$.
\end{definition}

We call $R$ the \emph{quotient digraph} or simply \emph{quotient} and the subdigraphs
$H_1,\ldots,H_m$ the \emph{modules} of this composition.
Each quotient vertex $r_i$ represents the module $H_i$.
In particular, vertices belonging to the same module have identical
in- and out-neighborhoods outside that module.

\begin{definition}[Round decomposition of locally semicomplete digraphs]
\label{def:round_decomp}
A locally semicomplete digraph $D$ is \emph{round-decomposable} if
$D=R[S_1,\ldots,S_m]$
for some $m\ge 2$, where $R$ is a round local tournament and every
$S_i$ is a strong semicomplete digraph. We call $R$ the
\emph{round quotient} and the $S_i$ the \emph{modules}.
\end{definition}

We use the following structural classification of connected locally
semicomplete digraphs.

\begin{theorem}[\cite{DBLP:journals/dm/Bang-JensenGGV97}]
\label{thm:LSD-classification}
Let $D$ be a connected locally semicomplete digraph. Then exactly one of the following holds:
\begin{enumerate}[label=(\alph*)]
  \item \textbf{Round-decomposable:} $D$ is round-decomposable and admits a \emph{unique}
  round decomposition (up to cyclic permutation of the round labelling)
  $
    D = R[S_1,\dots,S_r],
  $
  where $R$ is a round local tournament on $r\ge 2$ vertices and each $S_i$ is a strong semicomplete digraph.
  \item \textbf{Non-round-decomposable:} $D$ is neither semicomplete nor round-decomposable.
  \item \textbf{Semicomplete but not round-decomposable:} $D$ is semicomplete and does not admit a
  round decomposition.
\end{enumerate}
Moreover, there is a polynomial-time algorithm that decides which case holds and produces
a corresponding certificate (in particular, a round decomposition in case (a)).
\end{theorem}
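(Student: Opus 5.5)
The three cases are mutually exclusive directly from their definitions: (c) requires semicompleteness and (b) excludes it, and (a) is exactly the negation of the non-round-decomposability required in (b) and (c). Every connected LSD falls into one of them. So the content of Theorem~\ref{thm:LSD-classification} lies in two claims: the \emph{uniqueness} of the round decomposition in case (a), and the \emph{polynomial-time recognition} together with construction of the certificate. I would prove uniqueness first and then turn the uniqueness argument into an algorithm.

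For uniqueness, suppose $D=R[S_1,\dots,S_r]=R'[S'_1,\dots,S'_{r'}]$ are two round decompositions. The key step is to show that every module $S_i$ is contained in a single module $S'_a$; by symmetry the two partitions then coincide. Once the partitions agree, the quotients $R$ and $R'$ are the same digraph. A round labeling of a connected round local tournament is determined up to cyclic shift: the successor of $r_i$ is the unique out-neighbour of $r_i$ that dominates all other out-neighbours of $r_i$, since $N^+(r_i)$ is a transitive tournament ordered consecutively.

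To show the containment, assume $S_i$ meets two modules $S'_a\neq S'_b$. Since $D[S_i]$ is strong, there are arcs in both directions between $S_i\cap(\bigcup_{c\ne a}S'_c)$ and $S_i\cap S'_a$. Because $R'$ is oriented, all arcs between two distinct modules $S'_a,S'_b$ go in one direction. So a closed walk inside $S_i$ leaving and re-entering $S'_a$ projects to a directed cycle of $R'$ through $r'_a$. All vertices of $S_i$ are mutually adjacent and have identical neighbourhoods outside $S_i$. Hence every quotient vertex on this cycle is adjacent to every other one, and the corresponding modules $S'_c$ are all twins with respect to $V(D)\setminus S_i$. Roundness of $R'$ (consecutive in- and out-neighbourhoods) should then force these quotient vertices to form a strong semicomplete block whose union of modules is itself a strong semicomplete module of $D$. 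This contradicts the fact that $R'$ is a local tournament, i.e.\ oriented with no $2$-cycles, unless the whole of $D$ collapses. The collapsing case gives $r'=1$, which is excluded by $m\ge2$.

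I expect this twin/roundness argument to be the main obstacle. It is where the interplay between the orientation of $R'$ and the strongness of $S_i$ must be used carefully, and I would first try it on the sub-case where the projected cycle is a $3$-cycle of $R'$.

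For the algorithm, the uniqueness proof suggests a canonical candidate partition. Take the maximal vertex sets $S$ that are modules of $D$ and induce strong semicomplete subdigraphs; these can be found by merging twin classes, e.g.\ via a standard modular decomposition in polynomial time. Form the quotient $R$ and test whether it is an oriented local tournament with at least two vertices admitting a round labeling. This test is done greedily by following the ``dominating out-neighbour'' successor rule and checking the consecutive-neighbourhood conditions. If the test succeeds we are in case (a) and output the decomposition. Otherwise we test semicompleteness in $\mathcal O(n^2)$ time and report case (c) or (b). Correctness of the rejection relies on the uniqueness result: if any round decomposition existed, its modules would coincide with the canonical candidates.
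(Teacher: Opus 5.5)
\textbf{Gap 1: the containment step aims at the wrong contradiction.} The paper gives no proof of this theorem. It imports it, together with the recognition algorithm, from Bang-Jensen, Guo, Gutin and Volkmann, so your argument has to stand on its own. Your reduction to uniqueness plus recognition is right. But a module of the quotient that induces a strong tournament is perfectly compatible with the quotient being oriented. For instance, a directed triangle all of whose vertices dominate a fourth vertex is a local tournament in which the triangle is a module.

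The missing ingredient is a consequence of roundness: in a round local tournament, every $N^+(r_i)$ and every $N^-(r_i)$ induces a \emph{transitive} tournament. Indeed, suppose $r_j$ precedes $r_k$ in the out-interval of $r_i$ but $r_k\to r_j$. Then the out-interval of $r_k$ runs forward from $r_{k+1}$ to $r_j$, hence through $r_i$, so $r_k\to r_i$ while $r_i\to r_k$.

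With this fact the step closes. Let $P=\{r'_c : S'_c\cap S_i\neq\emptyset\}$ and assume $|P|\ge2$.
\begin{itemize}
\item $R'[P]$ is a strong tournament, being the quotient of $D[S_i]$ under the partition into the sets $S'_c\cap S_i$.
\item $P$ is a module of $R'$, since each $S'_e$ with $r'_e\notin P$ lies entirely outside the module $S_i$.
\item If $P\neq V(R')$, connectivity gives some $r'_e\notin P$ adjacent to $P$. As $P$ is a module, $P\subseteq N^+(r'_e)$ or $P\subseteq N^-(r'_e)$, contradicting transitivity.
\item If $P=V(R')$, any $z\notin S_i$ lies in some $S'_e$ meeting $S_i$ and treats all of $S_i$ alike. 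This makes $r'_e$ a source or sink of the strong tournament $R'$, which is impossible. Hence $S_i=V(D)$, contradicting $r\ge2$. It is $r$, not $r'$, that collapses.
\end{itemize}

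\textbf{Gap 2: the algorithm misclassifies strong semicomplete inputs.} For such $D$, the set $V(D)$ is itself a strong semicomplete module. Your maximal candidate partition is then $\{V(D)\}$, and the quotient has a single vertex. The directed triangle is round-decomposable, with itself as quotient and singleton modules, yet your algorithm would report it as case~(c). You need maximal \emph{proper} strong semicomplete modules.

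The rejection step also needs more than uniqueness between two round decompositions. It needs that every proper strong semicomplete module lies inside a single round module. The corrected argument above gives exactly this, since it only uses that $S_i$ is a proper strong semicomplete module and that $R'$ is a connected round local tournament.

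\textbf{Minor point: the successor rule at out-degree $0$.} Your rule says nothing at vertices of out-degree $0$. A connected round local tournament has at most one such vertex. Suppose $r_i$ and $r_j$ both had out-degree $0$. Then $r_{i+1}$ and $r_{j+1}$ would have in-degree $0$, so no out-interval could contain either of them. There would then be no arcs between the cyclic intervals $r_{i+1},\dots,r_j$ and $r_{j+1},\dots,r_i$, contradicting connectivity. Hence the successor permutation, and with it the labelling up to cyclic shift, is still forced.
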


\begin{definition}[Quasi-transitive digraphs]
   A digraph $D$ is \emph{quasi-transitive} if, for all vertices
$x,y,z\in V(D)$ with $x\neq z$,
$xy,yz\in A(D)
\Longrightarrow
\{xz,zx\}\cap A(D)\neq\emptyset$.
Quasi-transitivity is hereditary under taking induced subdigraphs. 
\end{definition}

We use the following canonical decomposition of quasi-transitive digraphs.

\begin{theorem}[\cite{DBLP:journals/jgt/Bang-JensenH95}]
\label{thm:qt-canonical-decomposition}
Let $D$ be a strong quasi-transitive digraph that is not semicomplete.
Then $D$ admits a nontrivial decomposition
$D=Q[H_1,\ldots,H_t]$,
where $t\ge2$, $V(Q)=\{q_1,\ldots,q_t\}$, $Q$ is a strong
semicomplete digraph, and every $H_i$ is either a singleton or a
non-strong quasi-transitive digraph. Moreover, such a decomposition can
be found in polynomial time: its modules are the connected components
of the complement of $U(D)$, and the quotient is obtained by contraction.
\end{theorem}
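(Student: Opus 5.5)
The plan is to take the decomposition to be the one named in the statement: let $C_1,\ldots,C_t$ be the vertex sets of the connected components of the complement $\overline{U(D)}$, put $H_i:=D[C_i]$, and let $Q$ be the digraph obtained by contracting each $C_i$ to a vertex $q_i$. Computing the components and contracting clearly takes polynomial time. The proof then has four parts: (i) each $C_i$ is a module; (ii) $Q$ is strong and semicomplete; (iii) $t\ge 2$; (iv) each $H_i$ is a singleton or non-strong.

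\emph{Modules.} Fix $z\notin C_i$. Then $z$ is adjacent in $U(D)$ to every $x\in C_i$, since otherwise $zx$ would be an edge of the complement and $z$ would lie in $C_i$.

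Now take $x,y\in C_i$ that are nonadjacent in $D$, and suppose $zx\in A(D)$.
\begin{itemize}
\item If $yz\in A(D)$, then $y\to z\to x$ and quasi-transitivity make $x,y$ adjacent, a contradiction. So $zy\in A(D)$.
\item If moreover $xz\in A(D)$, then $x\to z\to y$ gives the same contradiction.
\end{itemize}
Hence $z$ has exactly the same arcs to $x$ as to $y$. Since $C_i$ is connected in the complement, this propagates along complement paths, so $z$ relates uniformly to all of $C_i$. Thus $D=Q[H_1,\ldots,H_t]$.

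\emph{The quotient.} Any two vertices in different components are adjacent, so $Q$ is semicomplete. Contracting the modules of a strong digraph yields a strong digraph, so $Q$ is strong.

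\emph{Nontriviality, $t\ge2$.} This is the main obstacle: I must show that for a strong quasi-transitive $D$ that is not semicomplete, $\overline{U(D)}$ is disconnected. I would first prove a path lemma: if $x\to a\to b$ is a path and $x,b$ are nonadjacent, then quasi-transitivity forces every vertex adjacent to both ends to behave uniformly. From this, a shortest $x$--$y$ path between nonadjacent vertices has length $2$ after shortcutting.

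Then I would argue by contradiction. Suppose the complement is connected, and take a vertex set $W$ that is maximal subject to $D[W]$ being strong with $\overline{U(D[W])}$ connected. Using the strongness of $D$, pick an arc leaving $W$ and a path returning to $W$. The quasi-transitive shortcut argument shows that the outside vertex can be absorbed into $W$, or else that it is complete to $W$ with uniform orientation. The second alternative contradicts the connectedness of the complement. I would try induction on $|V(D)|$ first, deleting a vertex whose removal keeps $D$ strong where one exists, and handling the remaining case (a Hamiltonian-like cycle) directly via the shortcut lemma.

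\emph{Modules non-strong.} Suppose some $H_i$ is strong with $|C_i|\ge2$. If $H_i$ is semicomplete, its complement has no edges, which contradicts the connectedness of $C_i$ in $\overline{U(D)}$. Otherwise, $H_i$ is strong, quasi-transitive (by heredity) and not semicomplete, so the nontriviality claim applied to $H_i$ makes $\overline{U(H_i)}$ disconnected. This is again a contradiction, because $\overline{U(H_i)}=\overline{U(D)}[C_i]$ is connected. Hence every $H_i$ is a singleton or a non-strong quasi-transitive digraph.
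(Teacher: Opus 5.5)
\textbf{Verdict: there is a genuine gap.} Your parts (i) and (ii), and the reduction in (iv), are fine. In (i), the case $xz\in A(D)$ follows from the case you treat by reversing all arcs. The problem is that everything rests on (iii): every strong quasi-transitive $D$ with $|V(D)|\ge 2$ has $\overline{U(D)}$ disconnected. This is the whole nontrivial content of the theorem (the paper does not prove it; it cites Bang-Jensen and Huang), and your proposal does not prove it.

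The tools you sketch for (iii) do not work:
\begin{itemize}
\item Quasi-transitivity says exactly that two nonadjacent vertices are never joined by a path of length $2$. So your ``path lemma'' has a vacuous hypothesis, and a shortest path of length $2$ between nonadjacent vertices is impossible. Such a shortest path has length exactly $3$. For example, the arcs $01,12,23,20,31$ give a strong quasi-transitive digraph with $0,3$ nonadjacent and $\mathrm{dist}(0,3)=3$.
\item Your maximality set-up collapses. Under the contradiction hypothesis, $W=V$ is itself strong with connected complement, so the maximal $W$ is $V$ and no arc leaves it.
\item The closing sentences describe a plan, not an argument.
\end{itemize}

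\textbf{The missing idea} is to induct on a maximal \emph{proper} strong subdigraph, where the induction hypothesis gives you structure.
\begin{itemize}
\item \emph{Setting up $W$.} Suppose $\overline{U(D)}$ is connected. Then $D$ is not semicomplete, so $n\ge4$, because strong quasi-transitive digraphs on $3$ vertices are semicomplete. A shortest cycle is then not Hamiltonian: for $x_1\to x_2\to x_3$ on it, either arc between $x_1$ and $x_3$ gives a shorter cycle. Hence there is a maximal proper $W$ with $|W|\ge2$ and $D[W]$ strong. By induction and your (i)--(ii), $D[W]=Q'[C_1,\ldots,C_t]$ with $t\ge2$ and $Q'$ strong semicomplete.
\item \emph{The ear.} Take an ear $w\to u_1\to\cdots\to u_r\to w'$ with $w,w'\in W$, all $u_i\notin W$, and $r$ minimum. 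Quasi-transitivity applied to $w\to u_1\to u_2$ gives $r\le2$, and maximality of $W$ gives $V=W\cup\{u_1,\ldots,u_r\}$.
\item \emph{Case $r=1$.} Connectivity of $\overline{U(D)}$ gives $u_1$ a nonneighbour $y_i$ in every $C_i$. Let $z\in C_j$ be an out-neighbour of $u_1$. An arc $z\to y_i$ would create the path $u_1\to z\to y_i$ between nonadjacent vertices, so only $y_i\to z$ is present. Thus $C_i$ dominates $C_j$ one-way for every $i\ne j$, and $q_j$ is a sink of $Q'$, contradicting strong connectivity of $Q'$.
\item \emph{Case $r=2$.} Minimality of $r$ means $u_1$ has no out-neighbour in $W$ and $u_2$ has no in-neighbour in $W$. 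Then $y\to w\to u_1$ forces $y\to u_1$. Propagating backwards through the strong $D[W]$ gives an arc from every vertex of $W$ to $u_1$. Symmetrically, $u_2$ has an arc to every vertex of $W$. So $\{u_1,u_2\}$ is adjacent to all of $W$, and $\overline{U(D)}$ is disconnected, a contradiction.
\end{itemize}
With this lemma in place, your (iv) goes through as written.
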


\section{Structural and Approximation Toolkit}
\label{sec:toolkit}

This section collects the general structural and approximation facts
used throughout the paper, followed by the facts specific to locally
semicomplete digraphs.

\subsection{Strongly Connected Components}

The following lemmas show that cycles are contained inside SCCs and that combining approximate solutions for the SCCs preserves the approximation factor. 

\begin{lemma}\label{lem:scc}
Let $C_1,\ldots,C_k$ be the vertex sets of the strongly connected
components of a digraph $D$. Then $X\subseteq V(D)$ is a feedback
vertex set of $D$ if and only if $X\cap C_i$ is a feedback vertex set
of $D[C_i]$ for every $i\in[k]$. Also, for every nonnegative
weight function $w$, we have
\[
\operatorname{OPT}(D,w)
=
\sum_{i=1}^k
\operatorname{OPT}(D[C_i],w|_{C_i}).
\]
\end{lemma}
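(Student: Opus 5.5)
The plan is to prove the characterization first and then obtain the weight identity from it.

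The key structural fact is that every directed cycle of $D$ lies entirely inside a single strongly connected component. To see this, take a directed cycle $C$ and any two vertices $u,v$ on it. Following the cycle gives a directed path from $u$ to $v$ and one from $v$ to $u$. Hence all vertices of $C$ are mutually reachable. Since the $C_i$ partition $V(D)$ into the maximal classes of mutual reachability, $V(C)\subseteq C_i$ for some $i$. The arcs of $C$ join vertices of $C_i$, so $C$ is a directed cycle of the induced subdigraph $D[C_i]$.

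Next I prove the two directions of the equivalence.
\begin{enumerate}
\item ($\Rightarrow$) Suppose $X$ is a feedback vertex set of $D$. Note that $D[C_i]-(X\cap C_i)$ is an induced subdigraph of $D-X$. A directed cycle in the former would therefore be one in the latter, so $X\cap C_i$ is a feedback vertex set of $D[C_i]$.
\item ($\Leftarrow$) Suppose each $X\cap C_i$ is a feedback vertex set of $D[C_i]$, and suppose $D-X$ still contains a directed cycle $C$. By the fact above, $C$ lies in some $D[C_i]$. It avoids $X$, hence avoids $X\cap C_i$. This contradicts the assumption, so $D-X$ is acyclic.
\end{enumerate}

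For the weight identity I prove two inequalities.
\begin{enumerate}
\item ($\ge$) Let $X^*$ be an optimal feedback vertex set of $D$. Each $X^*\cap C_i$ is a feedback vertex set of $D[C_i]$, so $\operatorname{OPT}(D[C_i],w|_{C_i})\le w(X^*\cap C_i)$. The $C_i$ are disjoint and cover $V(D)$, so summing over $i$ gives $\sum_i\operatorname{OPT}(D[C_i],w|_{C_i})\le w(X^*)=\operatorname{OPT}(D,w)$.
\item ($\le$) Choose optimal sets $X_i\subseteq C_i$ for the components and put $X=\bigcup_i X_i$. Then $X\cap C_i=X_i$, so $X$ is a feedback vertex set of $D$ by the characterization. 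Its weight is $\sum_i w(X_i)$, which gives $\operatorname{OPT}(D,w)\le\sum_i\operatorname{OPT}(D[C_i],w|_{C_i})$.
\end{enumerate}
Optima exist because $D$ is finite.

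No step is a real obstacle. The only point needing care is the claim that a cycle lies within one SCC, and it follows directly from the maximality in the definition of an SCC.
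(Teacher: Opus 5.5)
Your proposal is correct and follows the paper's proof: both rest on the fact that every directed cycle lies inside a single strongly connected component, which gives the feasibility equivalence, and then obtain the weight identity by optimizing independently on each component. You spell out the two directions and the two inequalities that the paper compresses into one sentence.
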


\begin{proof}
Every directed cycle is contained in a single strongly connected
component. The feasibility equivalence follows, and minimizing
independently over the components gives the equality.
\end{proof}

\begin{corollary}\label{cor:scc-approx}
Let $\mathcal C$ be a hereditary class of digraphs. Suppose that
$\mathcal B$ is a $\rho$-approximation for weighted {\sc DFVS} on
strong members of $\mathcal C$. Applying $\mathcal B$ to every SCC of
$D\in\mathcal C$ and taking the union always produces a feedback
vertex set of $D$. Whenever all component calls satisfy their
approximation guarantees, the resulting solution is a
$\rho$-approximation for $D$.
\end{corollary}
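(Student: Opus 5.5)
The corollary follows directly from Lemma~\ref{lem:scc}, applied to the strongly connected components $C_1,\ldots,C_k$ of $D$. We use the hereditary property of $\mathcal C$ only to make sure that each call to $\mathcal B$ is legitimate.

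First, we check that the calls are valid. Every $D[C_i]$ is an induced subdigraph of $D\in\mathcal C$, so $D[C_i]\in\mathcal C$ because $\mathcal C$ is hereditary. By the definition of an SCC, $D[C_i]$ is strong. Hence $D[C_i]$ is a strong member of $\mathcal C$, and $\mathcal B$ may be run on it with the restricted weights $w|_{C_i}$. Call the output $X_i$. The approximation algorithm always returns a feedback vertex set, so $X_i\subseteq C_i$ is an FVS of $D[C_i]$.

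Second, we show feasibility. Let $X=\bigcup_i X_i$. The sets $C_i$ partition $V(D)$, so $X\cap C_i=X_i$ for every $i$. By the feasibility part of Lemma~\ref{lem:scc}, $X$ is a feedback vertex set of $D$. This holds for every run, whatever the random choices of $\mathcal B$.

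Third, we bound the weight. Assume every call succeeds, i.e.\ $w(X_i)\le\rho\operatorname{OPT}(D[C_i],w|_{C_i})$ for all $i$. The $X_i$ are pairwise disjoint, so by the equality in Lemma~\ref{lem:scc}
\[
w(X)=\sum_{i=1}^k w(X_i)\le\rho\sum_{i=1}^k\operatorname{OPT}(D[C_i],w|_{C_i})=\rho\operatorname{OPT}(D,w).
\]

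There is no real obstacle in this argument. The only point that needs care is that the statement is conditional on all component calls succeeding. A guarantee holding with constant probability per call does not transfer to the union, so any probabilistic bound for $X$ must come from amplifying each call separately and then taking a union bound over the at most $n$ components. That step is left to the later theorems.
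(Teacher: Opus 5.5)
Your proof is correct and follows essentially the same route as the paper: you get feasibility from the SCC characterization in Lemma~\ref{lem:scc}, and the weight bound by summing the per-component guarantees against the additive identity for $\operatorname{OPT}$. Your explicit check that each $D[C_i]$ is a strong member of $\mathcal C$, using heredity, is a detail the paper leaves implicit.
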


\begin{proof}
Let $X_i$ be the solution returned on $D[C_i]$ and put
$X:=\bigcup_{i=1}^k X_i$. Feasibility follows from
Lemma~\ref{lem:scc}. Whenever all component calls satisfy their
approximation guarantees,
\[
\begin{aligned}
w(X)=\sum_{i=1}^k w(X_i)
\le
\rho\sum_{i=1}^k\operatorname{OPT}(D[C_i],w|_{C_i})
=
\rho\,\operatorname{OPT}(D,w).
\end{aligned}
\]
\end{proof}

\subsection{Cycles and Approximation in Digraph Compositions}
\label{sec:comp}

Let $D=R[H_1,\ldots,H_m]$ be a composition and let $X\subseteq V(D)$.
For every $i\in[m]$, put
    $T_i:=V(H_i)\setminus X$,
   $I_X:=\{i\in[m]:T_i\ne\emptyset\}$.
   
\begin{lemma}
\label{lem:composition-cycles}
$D-X$ is acyclic if and only if both of the following hold:
\begin{enumerate}
    \item $H_i[T_i]$ is acyclic for every $i\in I_X$;
    \item $R[\{r_i:i\in I_X\}]$ is acyclic.
\end{enumerate}
In particular, if $X$ is a feedback vertex set of $D$, then
    $Y_X:=\{r_i:T_i=\emptyset\}$
is a feedback vertex set of $R$.
\end{lemma}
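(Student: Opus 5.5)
We prove the two directions separately, using only Definition~\ref{def:composition}. The key observation is that every arc of $D-X$ either lies inside a single module $H_i[T_i]$ or goes between two surviving modules $H_i,H_j$ with $i\ne j$ and $r_ir_j\in A(R)$. Conversely, every arc $r_ir_j$ of $R[\{r_i:i\in I_X\}]$ is realized by \emph{every} pair $x\in T_i$, $y\in T_j$.

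\emph{Necessity.} Suppose $D-X$ is acyclic. Condition~1 holds because $H_i[T_i]$ is an induced subdigraph of $D-X$: the arcs of $D$ inside $V(H_i)$ are exactly $A(H_i)$. For condition~2, suppose $r_{i_1}r_{i_2}\cdots r_{i_\ell}r_{i_1}$ is a directed cycle in $R[\{r_i:i\in I_X\}]$ with distinct indices and $\ell\ge2$. Pick one vertex $x_j\in T_{i_j}$ for each $j$. Since the indices are distinct, the $x_j$ are distinct. By the composition rule, $x_jx_{j+1}\in A(D)$ for all $j$ (indices mod $\ell$), so $x_1\cdots x_\ell x_1$ is a directed cycle in $D-X$, a contradiction.

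\emph{Sufficiency.} Suppose both conditions hold, and let $C$ be a directed cycle in $D-X$. Project $C$ to $R$ by replacing each vertex by the index of its module. There are two cases.
\begin{itemize}
\item If $C$ stays in one module $T_i$, then it is a cycle of $H_i[T_i]$, contradicting condition~1.
\item Otherwise, delete consecutive repetitions in the cyclic sequence of module indices. This gives a closed walk $r_{i_1}\to r_{i_2}\to\cdots\to r_{i_1}$ in $R[\{r_i:i\in I_X\}]$ of length at least~$2$, because each inter-module arc of $C$ comes from an arc of $R$. A closed directed walk of positive length contains a directed cycle, which contradicts condition~2.
\end{itemize}
The only point needing care is the walk-to-cycle step, and it is standard: take a shortest closed subwalk between two repeated vertices. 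Note that $R$ has no loops, so consecutive distinct indices give genuine arcs.

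\emph{The ``in particular'' claim.} The set $Y_X$ consists of exactly the quotient vertices $r_i$ with $i\notin I_X$. Hence $R-Y_X=R[\{r_i:i\in I_X\}]$, which is acyclic by condition~2. So $Y_X$ is a feedback vertex set of $R$. No step here is a real obstacle; the argument is a direct unpacking of the definition of composition.
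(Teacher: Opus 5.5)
Your proposal is correct and follows essentially the same route as the paper. You lift a cycle of $R[\{r_i:i\in I_X\}]$ by choosing one representative per surviving module, project a cycle of $D-X$ to a closed walk in the surviving quotient by collapsing consecutive same-module vertices, and observe that $R-Y_X=R[\{r_i:i\in I_X\}]$.
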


\begin{proof}
Suppose first that $D-X$ is acyclic.  Each $H_i[T_i]$ is an induced
subdigraph of $D-X$, and is therefore acyclic.  If
$R[\{r_i:i\in I_X\}]$ contained a directed cycle
    $r_{i_1}\to r_{i_2}\to\cdots\to r_{i_\ell}\to r_{i_1}$,
choose any $x_j\in T_{i_j}$ for every $j$.  The definition of composition
implies
    $x_1\to x_2\to\cdots\to x_\ell\to x_1$
in $D-X$, a contradiction.

Conversely, assume that conditions (1) and (2) hold and suppose that
$D-X$ contains a directed cycle $C$. The cycle cannot lie entirely in
one module, by condition~(1), and hence it visits at least two modules.
Traverse $C$ and replace each maximal consecutive block of vertices
belonging to the same module by the index of that module, merging the
first and last blocks if they belong to the same module. Every transition
between distinct modules corresponds to an arc of $R$. We therefore
obtain a directed closed walk in
$R[\{r_i:i\in I_X\}]$. Every directed closed walk contains a directed
cycle, contradicting condition~(2).

For the final assertion, the quotient remaining after deleting $Y_X$ is
exactly $R[\{r_i:i\in I_X\}]$, which must be acyclic by the forward
direction.
\end{proof}

Let $w:V(D)\rightarrow \mathbb{R}_{\geq 0}$ be a non-negative weight function defined on $V(D)$.  For every $i\in[m]$, let $F_i$ be a feedback vertex set of
$H_i$. 
We define
    $W_i:=w(V(H_i))$,
    $\kappa_i:=w(F_i)$ and 
    $\Delta(r_i):=W_i-\kappa_i$ where $\Delta\ge0$.  For any feedback vertex set $Y$ of $R$, we define
    $X(Y):=
    \bigcup_{r_i\in Y}V(H_i)
    \;\cup\;
    \bigcup_{r_i\notin Y}F_i$.

\begin{lemma}
\label{lem:composition-keep-delete}
The set $X(Y)$ is a feedback vertex set of $D$, and    $w(X(Y))=\sum_{i=1}^m\kappa_i+\Delta(Y)$. 
Moreover, let $\alpha,\beta\ge1$.  If
    $w(F_i)\le
    \alpha\,\operatorname{OPT}(H_i,w|_{V(H_i)})$ for every 
    $i\in[m]$
and
    $\Delta(Y)\le
    \beta\,\operatorname{OPT}(R,\Delta)$,
then
    $w(X(Y))\le
    \max\{\alpha,\beta\}\operatorname{OPT}(D,w)$.
\end{lemma}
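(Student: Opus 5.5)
The proof has three parts: feasibility, the weight identity, and the approximation bound. The bound rests on a lower bound for $\operatorname{OPT}(D,w)$ in terms of the $\operatorname{OPT}(H_i)$ and $\operatorname{OPT}(R,\Delta)$.

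\textbf{Feasibility.} I apply Lemma~\ref{lem:composition-cycles} with $X=X(Y)$. For $r_i\in Y$ we have $T_i=\emptyset$. For $r_i\notin Y$ we have $T_i=V(H_i)\setminus F_i$, which is nonempty or empty, but in either case $H_i[T_i]$ is acyclic because $F_i$ is a feedback vertex set of $H_i$. Hence $I_X\subseteq\{i:r_i\notin Y\}$, so $R[\{r_i:i\in I_X\}]$ is an induced subdigraph of the acyclic digraph $R-Y$. Both conditions of Lemma~\ref{lem:composition-cycles} then hold, and $X(Y)$ is a feedback vertex set of $D$.

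\textbf{Weight identity.} The union defining $X(Y)$ is disjoint, since the modules are vertex-disjoint. Therefore
\[
w(X(Y))=\sum_{r_i\in Y}W_i+\sum_{r_i\notin Y}\kappa_i=\sum_{i}\kappa_i+\sum_{r_i\in Y}(W_i-\kappa_i),
\]
which equals $\sum_i\kappa_i+\Delta(Y)$.

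\textbf{Lower bound on $\operatorname{OPT}(D,w)$.} This is the main step. Let $X^*$ be an optimal feedback vertex set of $D$ and let $Y^*:=Y_{X^*}=\{r_i:V(H_i)\subseteq X^*\}$. By Lemma~\ref{lem:composition-cycles}, $Y^*$ is a feedback vertex set of $R$, so $\operatorname{OPT}(R,\Delta)\le\Delta(Y^*)$. For each $i$, the set $X^*\cap V(H_i)$ is a feedback vertex set of $H_i$, since $H_i-X^*$ is an induced subdigraph of $D-X^*$. Writing $o_i:=\operatorname{OPT}(H_i,w|_{V(H_i)})$, I split the weight module by module:
\[
w(X^*)=\sum_{r_i\in Y^*}W_i+\sum_{r_i\notin Y^*}w(X^*\cap V(H_i))\ge\sum_{r_i\in Y^*}W_i+\sum_{r_i\notin Y^*}o_i.
\]

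\textbf{Comparing with $w(X(Y))$.} Set $\gamma:=\max\{\alpha,\beta\}$. The hypotheses give
\[
w(X(Y))\le\sum_i\kappa_i+\beta\,\Delta(Y^*)=\sum_{r_i\notin Y^*}\kappa_i+\sum_{r_i\in Y^*}\kappa_i+\beta\sum_{r_i\in Y^*}(W_i-\kappa_i).
\]
\begin{itemize}
\item For $r_i\notin Y^*$, the term is $\kappa_i\le\alpha o_i\le\gamma o_i$.
\item For $r_i\in Y^*$, the terms combine to $\kappa_i+\beta(W_i-\kappa_i)=\beta W_i-(\beta-1)\kappa_i\le\beta W_i\le\gamma W_i$, using $\beta\ge1$ and $\kappa_i\ge0$.
\end{itemize}
Summing over all $i$ and comparing with the lower bound gives $w(X(Y))\le\gamma\,w(X^*)=\max\{\alpha,\beta\}\operatorname{OPT}(D,w)$.

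The obstacle is handling modules in $Y^*$: the term $\kappa_i$ must be absorbed into $W_i$ rather than into $o_i$. This works only because $\Delta(r_i)=W_i-\kappa_i$ is measured against the computed $\kappa_i$, which makes $\kappa_i+\beta\Delta(r_i)\le\beta W_i$ possible. The standing assumption $\Delta\ge0$ is what makes $(R,\Delta)$ a valid nonnegatively weighted instance, so that $\operatorname{OPT}(R,\Delta)$ is well defined.
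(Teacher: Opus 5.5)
Your proof is correct and follows essentially the same argument as the paper: feasibility via Lemma~\ref{lem:composition-cycles}, the cost identity by summing module by module, and the bound obtained by taking $Y^\star=\{r_i : V(H_i)\subseteq X^\star\}$ for an optimum $X^\star$, using $\Delta(Y)\le\beta\Delta(Y^\star)$, absorbing $\kappa_i+\beta\Delta(r_i)\le\beta W_i$ on $Y^\star$, and bounding $\kappa_i\le\alpha\,w(X^\star\cap V(H_i))$ on the remaining modules. Your separate lower bound on $\operatorname{OPT}(D,w)$ in terms of the $o_i$ only repackages the paper's direct chain of inequalities.
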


\begin{proof}
For every module that remains nonempty after deleting $X(Y)$, the residual
module is an induced subdigraph of $H_i-F_i$ and is therefore acyclic.
The quotient induced by the nonempty residual modules is an induced
subdigraph of $R-Y$, and is therefore acyclic.  Lemma~\ref{lem:composition-cycles} now shows that $X(Y)$ is a feedback vertex
set of $D$.

If $r_i\in Y$, the construction pays
$W_i=\kappa_i+\Delta(r_i)$ on module $H_i$; otherwise it pays
$\kappa_i$.  Summing proves the displayed cost identity.

Let $X^\star$ be an optimum feedback vertex set of $(D,w)$ and define
    $Y^\star:=
    \{r_i:V(H_i)\subseteq X^\star\}$ and
    $I^\star:=
    \{i:V(H_i)\nsubseteq X^\star\}$.
By Lemma~\ref{lem:composition-cycles}, $Y^\star$ is a feedback vertex set
of $R$.  Hence
\[
\begin{aligned}
    w(X(Y))
    &=\sum_{i=1}^m\kappa_i+\Delta(Y)\\
    &\le \sum_{i=1}^m\kappa_i+
          \beta\Delta(Y^\star)\\
    &=\sum_{i\in I^\star}\kappa_i+
      \sum_{r_i\in Y^\star}
      \bigl(\beta W_i-(\beta-1)\kappa_i\bigr)\\
    &\le \sum_{i\in I^\star}\kappa_i+
          \beta\sum_{r_i\in Y^\star}W_i.
\end{aligned}
\]
For every $i\in I^\star$, the set
$X^\star\cap V(H_i)$ is a feedback vertex set of $H_i$.  Therefore
\[
    \kappa_i
    \le \alpha\,\operatorname{OPT}(H_i,w|_{V(H_i)})
    \le \alpha\,w(X^\star\cap V(H_i)).
\]
Writing $\rho:=\max\{\alpha,\beta\}$ and using that $X^\star$ contains
all of $V(H_i)$ for $r_i\in Y^\star$, we obtain
\[
\begin{aligned}
    w(X(Y))
    &\le
    \alpha\sum_{i\in I^\star}w(X^\star\cap V(H_i))
    +\beta\sum_{r_i\in Y^\star}W_i\\
    &\le \rho\,w(X^\star)
     = \rho\,\operatorname{OPT}(D,w).
\end{aligned}
\]
\end{proof}

\subsection{Local-Ratio for Vertex Cover}

The following is the standard edge-by-edge local-ratio decomposition
for weighted vertex cover, with self-looped vertices forced into the
vertex cover.

\begin{lemma}
\label{lem:vc-lr-decomposition}
Let $G=(U,E)$ be an undirected graph, possibly with self-loops, and let
$\omega:U\to\mathbb R_{\ge0}$. In $\mathcal O(|U|^3)$ time one can compute a
vertex cover $T$ and nonnegative weights
$\omega^\triangle,\omega'$ such that
$\omega=\omega^\triangle+\omega'$,
$\omega'(T)=0$ 
and for every $Y\subseteq U$ and every vertex cover $Z$ of $G$, $\omega^\triangle(Y)\le2\omega^\triangle(Z)$.
\end{lemma}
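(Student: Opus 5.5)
The plan is to run the classical Bar-Yehuda–Even local-ratio procedure edge by edge, and to maintain the invariant
\[
\omega^\triangle=\sum_j \varepsilon_j\,\mathbf 1_{e_j},
\]
where $\mathbf 1_{e_j}$ is the indicator of the endpoints of the $j$-th processed edge (one vertex for a self-loop).

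\textbf{Construction.} Set $\omega':=\omega$ and process the edges $e\in E$ one at a time.
\begin{itemize}
\item For a proper edge $e=uv$, let $\varepsilon:=\min\{\omega'(u),\omega'(v)\}$. Subtract $\varepsilon$ from both $\omega'(u)$ and $\omega'(v)$, and add $\varepsilon$ to $\omega^\triangle(u)$ and $\omega^\triangle(v)$.
\item For a self-loop at $u$, let $\varepsilon:=\omega'(u)$. Move all of it to $\omega^\triangle(u)$.
\end{itemize}
After each step $\omega=\omega^\triangle+\omega'$ holds and both functions stay nonnegative. Also, at least one endpoint of the processed edge now has $\omega'=0$ (for a loop, its unique endpoint). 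Residual weights only decrease, so this remains true later. Now define
\[
T:=\{u:\omega'(u)=0\}.
\]
Then $T$ is a vertex cover containing every looped vertex, and $\omega'(T)=0$ holds by definition. The running time is $\mathcal O(|E|)\subseteq\mathcal O(|U|^2)$, which is within $\mathcal O(|U|^3)$.

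\textbf{The inequality.} The inequality as literally stated, for every $Y\subseteq U$, cannot hold. Taking $Y=U$ would require $\omega^\triangle(U)\le 2\,\omega^\triangle(Z)$ for every cover $Z$, and this fails for a star. Take a star with centre $c$ and leaves $\ell_1,\dots,\ell_k$, with $\omega(c)=k$ and $\omega(\ell_i)=1$. Then $\omega^\triangle\equiv\omega$, so $\omega^\triangle(U)=2k$ while $\omega^\triangle(\{c\})=k$, and the claim needs $2k\le 2k$, which holds. With $\omega(c)$ larger than $k$ it holds again. A chain of several stars, however, breaks it. So I read $Y$ as ranging over the sets the paper uses, namely $Y\subseteq T$, or any $Y$ that is a vertex cover minimal within $T$. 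I prove the standard form $\omega^\triangle(T)\le 2\,\omega^\triangle(Z)$, which also bounds $\omega^\triangle(Y)$ for every $Y\subseteq T$ because $\omega^\triangle\ge0$.

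By the invariant, $\omega^\triangle(S)=\sum_j\varepsilon_j\,|S\cap e_j|$ for every set $S$. Fix a step $j$.
\begin{itemize}
\item Any cover $Z$ meets $e_j$, so $|Z\cap e_j|\ge1$.
\item $|T\cap e_j|\le|e_j|\le2$.
\end{itemize}
Summing over $j$ gives
\[
\omega^\triangle(T)\le\sum_j 2\varepsilon_j\le 2\,\omega^\triangle(Z).
\]
For a loop, $|e_j|=1$, so that step contributes $\varepsilon_j$ on the left and $\varepsilon_j$ on the right, and the inequality holds with room to spare.

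\textbf{Main obstacle.} The main difficulty is reconciling the quantifier on $Y$. I expect the intended statement is $Y\subseteq T$ (the set the later algorithm actually pays for), and the argument above covers exactly that case. If instead the paper needs arbitrary $Y$, the bound must go through $|Y\cap e_j|\le 2$ term by term. That is the same estimate, so it gives $\omega^\triangle(Y)\le 2\sum_j\varepsilon_j\le 2\,\omega^\triangle(Z)$ for every $Y\subseteq U$. This last version is the simplest uniform statement, and it is the one I would write.
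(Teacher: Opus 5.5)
Your construction and your final uniform bound are correct. They are the standard Bar-Yehuda--Even edge-by-edge local-ratio argument, which the paper cites as standard without writing out. Under the invariant $\omega^\triangle=\sum_j\varepsilon_j\mathbf 1_{e_j}$, every vertex cover $Z$ pays at least $\varepsilon_j$ in step $j$, while any set pays at most $|e_j|\varepsilon_j\le 2\varepsilon_j$. Defining $T$ as the zero-residual vertices, the handling of loops, nonnegativity, and the running time are all fine.

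The error is your claim that the inequality ``cannot hold'' for arbitrary $Y$ and that a chain of stars breaks it. The per-step bound $|Y\cap e_j|\le|e_j|\le 2$ does not depend on $Y$. Taking $Y=U$ therefore gives $\omega^\triangle(U)=\sum_j\varepsilon_j|e_j|\le 2\sum_j\varepsilon_j\le 2\,\omega^\triangle(Z)$ for every graph, every processing order, and every cover $Z$. Your star is the equality case, and gluing stars only adds more steps of the same form. Such an inequality would fail for $\omega$ in place of $\omega^\triangle$, but the excess is exactly what the construction leaves in $\omega'$.

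This matters beyond presentation, because the weaker version $Y\subseteq T$ that you offer as ``the sets the paper uses'' would not suffice. In Lemma~\ref{lem:pivot-candidate} the lemma is applied to $Y=X_v\setminus\{v\}$, where $X_v=T\cup F$ and $w^\triangle(v)=0$. Here $F\subseteq V(D)\setminus T$ can carry positive charged weight. Already for a single edge $xy$ with $0<w(x)<w(y)$, the vertex $y$ is charged $w(x)$ but keeps residual $w(y)-w(x)>0$, so $y\notin T$. With only $\omega^\triangle(T)\le 2\,\omega^\triangle(Z)$, the term $w^\triangle(F)$ in $w(X_v)=w^\triangle(X_v)+w'(F)$ would be uncontrolled. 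So delete the counterexample discussion and the restricted reading, and present the uniform bound for all $Y\subseteq U$ from your last paragraph as the proof.
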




We denote the algorithm in
Lemma~\ref{lem:vc-lr-decomposition} by
$\mathcal{VC}_{\mathrm{LR}}$.

\subsection{Structural Facts for Locally Semicomplete Digraphs}

We use three consequences of the structural theory of locally
semicomplete digraphs.

\begin{lemma}[\cite{DBLP:journals/dm/Bang-JensenGGV97}]
\label{lem:nonstrong-scc-semicomplete}
Every strongly connected component of a connected, non-strong locally
semicomplete digraph induces a semicomplete digraph.
\end{lemma}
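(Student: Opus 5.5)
The statement to prove is Lemma~\ref{lem:nonstrong-scc-semicomplete}: every SCC of a connected, non-strong locally semicomplete digraph $D$ induces a semicomplete digraph. The plan is a direct argument from the local semicompleteness condition, without the full classification theorem.

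The first step is a \emph{propagation fact}. Suppose $x\to y$ is an arc with $x$ and $y$ in different SCCs $C\neq C'$. Then $y$ dominates no vertex of $C$, since otherwise $C$ and $C'$ would merge. I then show that $x$ is adjacent to every vertex of $C'$, and symmetrically that $y$ is adjacent to every vertex of $C$.
\begin{itemize}
\item Take $z\in N^+(y)\cap C'$. Both $x$ and $z$ lie in $N^-(y)$? No: $x\in N^-(y)$, while $z\in N^+(y)$. So instead I use a path inside $C'$ from $z$ back to $y$.
\item The cleaner route is induction along paths in $C'$. Suppose $x\to u$ with $u\in C'$, and $u\to v$ with $v\in C'$. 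Then $x$ and $v$ both lie in $N^-(v)$? No: $x,u\in N^-(v)$ fails unless $x\to v$.
\item So I use instead: $u,v\in N^+(x)$ whenever $x\to u$ and $x\to v$.
\end{itemize}
The correct tool is the following. If $x\to u$ and $u\to v$ with $v\neq x$, then $x$ and $v$ are not necessarily adjacent in a locally semicomplete digraph; that property is quasi-transitivity, which $D$ need not have. What $D$ does satisfy is this: if $v\to u$ and $x\to u$, then $x$ and $v$ are adjacent, because both lie in $N^-(u)$.

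So I propagate backwards. Let $x\to y$ with $x\in C$, $y\in C'$. For every $v\in C'$ with $v\to y$, the vertices $v$ and $x$ both lie in $N^-(y)$, hence are adjacent. The arc between them must point $x\to v$, since $v\to x$ would merge $C$ and $C'$. Repeating this along reverse paths inside the strong digraph $D[C']$ gives $x\to v$ for all $v\in C'$. Symmetrically, using $N^+(x)$ and forward paths in $C$, every vertex of $C$ dominates $y$. Combining the two and iterating, I get $C\Rightarrow C'$ completely.

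The second step handles an SCC $C$ that receives an arc from outside, say some $x\notin C$ with $x\to C$. By step one, $x$ dominates all of $C$, so $C\subseteq N^+(x)$, which is semicomplete; hence $D[C]$ is semicomplete. The case where some vertex of $C$ dominates a vertex outside $C$ is symmetric, using $C\subseteq N^-(y)$.

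The third step uses connectivity. Because $U(D)$ is connected and $D$ is not strong, there are at least two SCCs. Every SCC $C$ is then joined by at least one arc to some other SCC, in one direction or the other, so one of the two cases of step two applies to $C$.

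The main obstacle is step one, specifically getting the direction of each propagation right: local semicompleteness only makes two in-neighbours (or two out-neighbours) of a common vertex adjacent. So the reverse-path propagation has to be done carefully inside the strong component, and the existence of the needed cycle-free orientation has to be checked at each step.
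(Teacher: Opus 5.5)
Your proof is correct, once you delete the abandoned attempts in the first bullet list. The core step is the reverse propagation along a path $v=v_0\to\cdots\to v_k=y$ inside the strong digraph $D[C']$:
\begin{itemize}
\item $x$ and $v_j$ are both in-neighbours of $v_{j+1}$, so they are adjacent;
\item the arc $v_j\to x$ would merge $C$ and $C'$;
\item hence $x\to v_j$, and by induction $x$ dominates all of $C'$.
\end{itemize}
The forward version, using out-neighbours of $x$, shows that every vertex of $C$ dominates $y$. Connectivity of $U(D)$, together with there being at least two strong components, then places every component inside some $N^+(x)$ or some $N^-(y)$. Either of these induces a semicomplete digraph.

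The paper gives no proof of this lemma; it simply imports it from the classification of Bang-Jensen, Guo, Gutin and Volkmann. Your argument is essentially the standard one from Bang-Jensen's original structure theory of connected non-strong locally semicomplete digraphs: an arc from one strong component to another forces complete domination, and both components are then semicomplete. What your route buys is a short, self-contained proof that uses only the definition, with no appeal to the heavier classification machinery.

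One simplification: the ``combining and iterating'' step that yields full domination $C\Rightarrow C'$ is not needed. For each component you use only one of the two one-sided propagations, depending on whether it receives or sends an arc across its boundary.
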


\begin{lemma}[\cite{DBLP:journals/dm/Bang-JensenGGV97}]
\label{lem:alpha2-nonround}
If $D$ is a connected, non-round-decomposable locally semicomplete
digraph, then $\alpha(U(D))\le2$.
\end{lemma}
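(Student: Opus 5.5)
The proof splits into the three cases of Theorem~\ref{thm:LSD-classification}, after removing the easy ones. If $D$ is semicomplete, any two vertices are adjacent and $\alpha(U(D))=1$. So the real content concerns case~(b): $D$ is neither semicomplete nor round-decomposable.

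\textbf{Step 1: reduce to $D$ strong.} I would show that every connected non-strong locally semicomplete digraph is round-decomposable. By Lemma~\ref{lem:nonstrong-scc-semicomplete}, its SCCs $D_1,\dots,D_p$ induce semicomplete digraphs. The standard argument for this uses local semicompleteness at a vertex having in- or out-neighbours in two different SCCs. It shows that the SCCs admit a unique acyclic ordering in which $D_i\Rightarrow D_{i+1}$ and $D[D_i\cup D_{i+1}]$ is semicomplete.

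Refining each $D_i$ into its strong semicomplete pieces, and using $N^+(v)$ and $N^-(v)$ being semicomplete, I expect to obtain a quotient that is a round local tournament. Its round labeling is simply the linear order of the pieces. That gives a round decomposition with $m\ge2$, contradicting the hypothesis. Hence $D$ is strong and not semicomplete.

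\textbf{Step 2: separator structure in the strong case.} For strong, non-semicomplete, non-round-decomposable $D$, I would use the structure from the classification of Bang-Jensen et al. There is a minimal separating set $S$ such that $D[S]$ is semicomplete and $D-S$ is connected but not strong. By Step~1 applied to $D-S$, its SCCs $D_1,\dots,D_p$ are semicomplete, ordered, with $D_i\Rightarrow D_{i+1}$ and $D_i\cup D_{i+1}$ semicomplete. In particular, two vertices of $D-S$ can be non-adjacent only if they lie in $D_i,D_j$ with $|i-j|\ge2$.

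\textbf{Step 3: rule out three independent vertices.} Suppose $\{x,y,z\}$ is independent. Since $D[S]$ is semicomplete, at most one of them lies in $S$. I would then do a case analysis.

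\emph{No vertex in $S$.} Here $x\in D_a$, $y\in D_b$, $z\in D_c$ with $a<b<c$ pairwise at distance at least $2$. Strongness forces arcs from $D_p$ (or a late component) back through $S$ to $D_1$. Applying local semicompleteness to the in- and out-neighbourhoods of vertices of $S$ and of $D_1,D_p$ should force $S$, together with the chain $D_1,\dots,D_p$, to wrap around in a cyclic order in which every vertex's neighbourhoods are intervals. That is precisely a round labeling of a quotient, contradicting non-round-decomposability.

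\emph{One vertex in $S$, say $x\in S$.} The vertex $x$ must be nonadjacent to $y\in D_b$ and $z\in D_c$. The same interval argument would again produce a round structure.

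\textbf{Main obstacle.} The hardest part is Step~3: turning "three pairwise far-apart vertices" into an explicit round decomposition, in particular verifying that the modules are strong semicomplete and that the quotient's neighbourhoods are cyclic intervals. I would try first to exploit the proper circular-arc representation of $U(D)$ to organise the cyclic order, and then check the orientation conditions locally.
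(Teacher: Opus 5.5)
The paper gives no proof of this lemma; it imports it from \cite{DBLP:journals/dm/Bang-JensenGGV97}. Your first two steps are fine. Step~1 needs no ``refinement'', because the strong components are already the modules. Suppose $x\in D_i$, $y\in D_j$ with $i<j$ and $xy\in A(D)$. Every in-neighbour of $y$ inside $D_j$ lies in $N^-(y)$ together with $x$, so it is adjacent to $x$ and, since arcs cannot go backwards, dominated by $x$. Strongness of $D_j$ then gives $x\Rightarrow D_j$, and the symmetric argument through $N^+(x)$ gives $D_i\Rightarrow D_j$. Moreover, $r_ir_j\in A(R)$ with $j>i+1$ forces $r_ir_{j-1}\in A(R)$ via $N^-(r_j)$. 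So neighbourhoods are intervals of the linear order, $R$ is round, and inputs of type (b) are indeed strong.

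The genuine gap is Step~3, which is the entire content of the lemma. For a strong non-semicomplete LSD, ``three pairwise non-adjacent vertices force a round decomposition'' is just the contrapositive of the statement, and you only assert it (``should force'', ``would again produce''). Two concrete ingredients are missing.

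\emph{Orientation.} You need an argument that controls the orientation of the arcs between $S$ and $D-S$, and the undirected structure cannot provide it. As the introduction recalls, the underlying graph of every LSD, round-decomposable or not, is a proper circular-arc graph, so any cyclic order you extract from $U(D)$ exists equally for non-round-decomposable inputs. What must be shown is that out-neighbourhoods lie immediately after, and in-neighbourhoods immediately before, each module. The far-apart triple has to be used to exclude the bad configurations: in the source's description of the non-round case, a strong component of $D[S]$ both sends arcs into and receives arcs from the middle part of $D-S$ in a pattern incompatible with any round labelling.

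\emph{Modules.} $D[S]$ need not be strong, and its vertices need not see the components $D_i$ uniformly, so $S$ cannot be treated as one block. For example, take the round local tournament on $\mathbb{Z}_{12}$ with arcs $i\to i+1,\,i+2$ and $S=\{0,1\}$. Then $\{0,4,8\}$ is independent and $S$ is a minimal separating set, yet $0$ and $1$ have different neighbourhoods in $D-S$. You must specify the modules, prove each is strong semicomplete with all-or-nothing arcs between modules, and verify that the quotient is a round local tournament.

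In \cite{DBLP:journals/dm/Bang-JensenGGV97} this is done through a detailed analysis of a minimal separating set $S$ with $D-S$ not semicomplete, and of the semicomplete decomposition of $D-S$. In the non-round-decomposable case that decomposition has exactly three parts with a specific arc pattern to $S$, and $\alpha\le2$ is read off from this description. Without such an analysis, or simply citing the result as the paper does, your proposal only reduces the lemma to an equivalent unproved claim.
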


\begin{lemma}[Vertex pancyclicity~\cite{DBLP:journals/dm/Bang-JensenGGV97}]
\label{lem:pancyclic}
Let $D$ be strong, locally semicomplete, non-round-decomposable, and
not semicomplete. Every vertex of $D$ lies on a directed cycle of
every length $\ell\in\{3,\ldots,|V(D)|\}$.
\end{lemma}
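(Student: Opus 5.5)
Vertex pancyclicity is a cited result, so the plan is to reconstruct it from the classification theory. Fix $D$ strong, locally semicomplete, non-round-decomposable and not semicomplete, and set $n=|V(D)|$. The first step is to establish a Hamiltonian cycle. Strong locally semicomplete digraphs are Hamiltonian; this is Bang-Jensen's theorem, and it can be proved by taking a longest cycle $C$ and using the local semicompleteness of in- and out-neighbourhoods to insert a vertex adjacent to $C$. So the case $\ell=n$ holds for every vertex.

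Next we go downward by induction on $\ell$. We show that if $v$ lies on a cycle $C$ of length $\ell\ge4$, then $v$ also lies on a cycle of length $\ell-1$. Write $C=v_0v_1\cdots v_{\ell-1}v_0$ with $v=v_0$. For each $i$, the vertices $v_{i-1}$ and $v_{i+1}$ are both out-neighbours of $v_i$? No: $v_{i-1}$ is an in-neighbour of $v_i$ and $v_{i+1}$ is an out-neighbour. The useful observation is that $v_{i+1}$ and $v_{i+2}$ both lie in $N^+(v_i)\cup N^-(v_{i+2})$. Concretely, $v_i$ and $v_{i+2}$ are both in-neighbours of... no; rather, $v_i$ and $v_{i+1}$ are both in-neighbours of $v_{i+2}$'s successor only if chords exist. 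The cleaner observation uses $N^-(v_{i+1})\ni v_i$ and $N^+(v_{i+1})\ni v_{i+2}$, which says nothing directly. Instead we use $N^+(v_i)\ni v_{i+1}$, and if $v_i$ and $v_{i+2}$ are adjacent, then either the chord $v_iv_{i+2}$ shortcuts $C$, or $v_{i+2}v_i$ gives a triangle. So the plan is to find a short chord that skips exactly one vertex other than $v$, producing a cycle of length $\ell-1$ through $v$. If no index $i$ with $v_{i+1}\ne v$ has the forward chord $v_iv_{i+2}$, then all present chords of this type point backward. In that case we exploit local semicompleteness ($v_{i-1},v_{i+1}$ are not both neighbours of a common vertex in the same direction) to argue that $C$ has the round structure, with neighbourhoods being consecutive intervals.

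The main obstacle is this structural step: when no shortcut chord exists, $D[V(C)]$, and then by extension all of $D$, must look like a round local tournament composed with semicomplete pieces, i.e.\ round-decomposable. That contradicts the hypothesis. I would first try to invoke the classification in Theorem~\ref{thm:LSD-classification} together with Lemma~\ref{lem:alpha2-nonround}. Since $\alpha(U(D))\le2$ and $D$ is not semicomplete, there are nonadjacent $x,y$ and every other vertex is adjacent to $x$ or $y$. The goal is to use this dense structure to force chords. For the base case $\ell=3$, from a $4$-cycle through $v$ we get a chord among $\{v_0,v_2\}$ or $\{v_1,v_3\}$, since $v_1,v_3$ are respectively an out- and an in-neighbour of $v_0$ and hence semicompleteness applies in $N^+(v_2)\cup N^-(v_2)$. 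Either orientation of such a chord yields a triangle through $v_0$, unless the chord is $v_1v_3$ oriented awkwardly, which is then handled by symmetry.

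Realistically, this lemma is Bang-Jensen et al.'s theorem, and the paper cites it. The full argument would follow their proof: show that any cycle avoiding the shortcut must be "round", and that non-round-decomposability provides the needed irregularity near every vertex.
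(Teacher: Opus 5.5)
\textbf{There is a genuine gap in the descent step.} The Hamiltonian starting point is fine: strong locally semicomplete digraphs are Hamiltonian by Bang-Jensen's theorem. The problem is the step from $\ell$ to $\ell-1$, which fails for two reasons.

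First, wherever you try to force a chord, you pair an in-neighbour with an out-neighbour of the same vertex: $v_i$ versus $v_{i+2}$ around $v_{i+1}$, $v_1$ versus $v_3$ around $v_0$, and ``semicompleteness in $N^+(v_2)\cup N^-(v_2)$''. Local semicompleteness only makes $N^+(x)$ and $N^-(x)$ \emph{separately} semicomplete, so no chord is forced; a directed $4$-cycle has none. Second, and decisively, your fallback ``no admissible shortcut $\Rightarrow$ $D[V(C)]$ is round $\Rightarrow$ $D$ is round-decomposable'' breaks at the second implication. Round-decomposability is a global property, and knowing it for an induced subdigraph says nothing about $D$.

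Here is a concrete instance. Let $V(D)=\{s,x,y,z,t\}$ with arcs $xy,yz,zx,xt,yt,zt,sx,zs,ts$. A direct check shows that $D$ is a strong local tournament whose only non-adjacent pair is $s,y$.

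\emph{$D$ is not round-decomposable.} $N^-(t)=\{x,y,z\}$ spans a directed triangle. In-neighbourhoods of a round local tournament are transitive, so a round decomposition would need a nontrivial module, i.e.\ a strong subtournament on at least three vertices. The only such sets are $\{x,y,z\}$ and $\{s,x,t\}$. Neither is a module: $s$ is adjacent to $x$ but not to $y$, and $y$ is adjacent to $x$ and $t$ but not to $s$.

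\emph{Your descent stalls on it.} Take $v=s$ and $C=s\,x\,y\,z\,s$. Its only chord is $zx$, which skips $s$ and closes the triangle $x\,y\,z\,x$ avoiding $s$. This is exactly the ``awkward orientation'' your base case dismisses by symmetry. Moreover $D[V(C)]$ is a round local tournament (labelling $s,x,y,z$), just as your structural step predicts, yet $D$ is not round-decomposable, so no contradiction arises. The unique triangle through $s$ is $s\,x\,t\,s$, which uses $t\notin V(C)$.

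Hence any correct descent must leave $V(C)$ and use the global structure of non-round-decomposable digraphs. Your proposal has no mechanism for this; as your last paragraph concedes, it defers the real argument to Bang-Jensen, Guo, Gutin and Volkmann. The paper likewise does not prove this lemma but imports it from their classification work.
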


The next lemma combines the SCC structure with the independence bound.
It will allow us to handle all components outside a chosen pivot using
the semicomplete routine.

\begin{lemma}
\label{lem:nonpivot-sccs}
Let $H$ be a locally semicomplete digraph with $\alpha(U(H))\le2$,
and let $v\in V(H)$. Every SCC of $H$ not containing $v$ induces
a semicomplete digraph.
\end{lemma}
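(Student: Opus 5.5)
The plan is a case split on the connected component of $U(H)$ that contains the given SCC, using Lemma~\ref{lem:nonstrong-scc-semicomplete} in one case and the bound $\alpha(U(H))\le 2$ in the other. Fix an SCC $C$ of $H$ with $v\notin C$, and let $K$ be the vertex set of the connected component of $U(H)$ containing $C$. Every directed path between vertices of $C$ lies inside $K$. Hence $C$ is also an SCC of the connected locally semicomplete digraph $H[K]$, which is locally semicomplete because the class is closed under induced subdigraphs.

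\emph{Case 1: $H[K]$ is not strong.} Then Lemma~\ref{lem:nonstrong-scc-semicomplete} applies directly to $H[K]$. It shows that its SCC $C$ induces a semicomplete digraph, and we are done. In this case the position of $v$ plays no role.

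\emph{Case 2: $H[K]$ is strong.} Then $H[K]$ has a single SCC, so $C=K$ and $C$ is an entire connected component of $U(H)$. Since $v\notin C$, the vertex $v$ lies in a different connected component of $U(H)$, so $v$ is nonadjacent to every vertex of $C$. Suppose $H[C]$ were not semicomplete. Then there would be distinct $x,y\in C$ with neither $xy$ nor $yx$ in $A(H)$. In that case $\{x,y,v\}$ would be an independent set of size $3$ in $U(H)$, contradicting $\alpha(U(H))\le2$. Hence $H[C]$ is semicomplete. A singleton $C$ is trivially semicomplete.

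There is no real obstacle. The only point needing care is the reduction in the first paragraph: Lemma~\ref{lem:nonstrong-scc-semicomplete} is stated for connected digraphs, so we must apply it to the component $H[K]$ rather than to $H$, and check that $C$ remains an SCC there. That holds because strong connectivity never crosses components of $U(H)$.
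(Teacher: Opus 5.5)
Your proof is correct and follows essentially the paper's argument: split along the connected components of $U(H)$, apply Lemma~\ref{lem:nonstrong-scc-semicomplete} to non-strong components, and use the independent set $\{x,y,v\}$ for components not containing $v$. The difference is only in organization. The paper cases on the component containing $v$ and shows every other component is entirely semicomplete, while you case on the component containing $C$ and use the independence bound only when that component is strong. Your explicit check that $C$ remains an SCC of $H[K]$ is a detail the paper leaves implicit.
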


\begin{proof}
Let $K$ be the connected component of $U(H)$ containing $v$.
If $H[K]$ is strong, its only SCC contains $v$; otherwise, all its
SCCs are semicomplete by Lemma~\ref{lem:nonstrong-scc-semicomplete}.
Every other connected component $B$ of $U(H)$ induces a semicomplete
digraph: two non-adjacent vertices $x,y\in B$, together with $v$,
would form an independent set of size three. This proves the claim.
\end{proof}

\section{Technical Overview}
\label{sec:tec-overview}

We first solve each strongly connected component separately, since
every directed cycle lies in one component, thus the optimum cost of FVS gets added
(Lemma~\ref{lem:scc}).
For strong inputs, the main distinction is
whether a useful composition is already available. If a composition is available then we can handle it easily with our composition
argument for approximating FVS (\Cref{sec:comp}). Otherwise, we need structural reductions by exploiting the properties of the specific instance to obtain an approximation for FVS.

\paragraph{{Compositions and round-decomposable LSDs.}}
In a composition $D=R[H_1,\ldots,H_m]$, removing cycles inside the
modules is not enough: cycles may still pass between them through the quotient. After computing an internal FVS $F_i$ in each module, we may
either keep this solution
or delete the entire module,
 which will cost an additional
$\Delta(r_i)=w(V(H_i))-w(F_i)$.
An FVS of the quotient, weighted by $\Delta$, specifies which modules
to delete completely. Lemma~\ref{lem:composition-keep-delete} shows
that this construction combines the module and quotient approximation
factors by their maximum, rather than their product. For a round
decomposition, the modules are semicomplete and admit a factor-$2$
approximation for FVS (\Cref{thm:semfvs-2approx}) and the round quotient can be solved exactly: its lightest
in-neighborhood is an optimum FVS (Lemma~\ref{lem:round-exact}).
Together, it gives us the  factor $2$ approximation for FVS in the whole composition of  round decomposable LSDs (\Cref{lem:composition-keep-delete}).

\paragraph{{The non-round-decomposable case.}}
The main structural difficulty arises when the strong input is neither
semicomplete nor round-decomposable. We first pick a vertex $v$ as a
possible pivot. Any FVS which does not contain $v$ must contain the other endpoint of every directed $2$-cycle
through $v$, and one of the other two vertices of every triangle through $v$. These requirements of the pivot vertex $v$ form a vertex-cover
instance in the auxiliary graph $G_v$ (defined at the beginning of \Cref{sec:nonround}). Now, applying local ratio, we get a vertex cover $T$ of $G_v$
and split the weights into a charged part and a residual part under
which $T$ has weight zero. Deleting $T$ therefore removes all directed
$2$- and $3$-cycles through $v$.

This local change has a global structural consequence. In $D-T$, the
pivot's SCC must be a singleton or round-decomposable (\Cref{lem:pivot-residual}).
The fact that the independence number of the instance is bounded by $2$ (\Cref{lem:alpha2-nonround}), together with the structure of non-strong LSDs, makes
every other SCC semicomplete (Lemma~\ref{lem:pivot-residual}). All
remaining components can thus be handled directly in an iterative manner. If $v$ lies outside an optimum FVS,
the bounds for the charged and residual weights combine to give
factor $2$ (Lemma~\ref{lem:pivot-candidate}). An optimum can always
leave at least one vertex, so trying every pivot and returning the
lightest solution suffices.

\paragraph{{Quasi-transitive digraphs.}}
The canonical decomposition of strong quasi transitive digraphs provides smaller quasi-transitive
modules and a semicomplete quotient
(Theorem~\ref{thm:qt-canonical-decomposition}). We solve the modules
recursively and apply the semicomplete routine to the quotient with
the same additional deletion costs as above. The composition \Cref{lem:composition-keep-delete} preserves factor $2$ at
every level, regardless of the recursion depth. 
We repeat the semicomplete routine independently enough times to ensure
that, with high probability, every call succeeds, even when its input
weights depend on the outcomes of earlier calls.

\section{A \texorpdfstring{$2$}{2}-Approximation for DFVS in LSDs}
\label{sec:lsd}

We first handle round-decomposable instance and then use the pivot
reduction for strong instances that are neither semicomplete nor
round-decomposable. Finally, we combine these routines with
\textsc{SEM-FVS} and SCC decomposition to obtain the full algorithm.

\subsection{{The Round-Decomposable Case}}
\label{sec:round}

Let $D=R[S_1,\ldots,S_m]$ be a round decomposition.
We first $2$-approximate FVS within each semicomplete module and solve the
round quotient exactly. Thereafter, we use Lemma~\ref{lem:composition-keep-delete}
to combine these solutions into a factor-$2$ approximation for FVS in $D$.

\subsubsection{{An Exact Algorithm for the Round Quotient}}

The round ordering makes every in-neighborhood an FVS. The next lemma
shows that choosing a lightest one is optimal, even with vertex weights.

\begin{lemma}
\label{lem:round-exact}
Let $R$ be a nonempty round local tournament with nonnegative vertex
weights $\Delta$. Every in-neighborhood and every out-neighborhood
is an FVS of $R$, and
\[
\operatorname{OPT}(R,\Delta)
=\min_{v\in V(R)}\Delta(N_R^-(v))
=\min_{v\in V(R)}\Delta(N_R^+(v)).
\]
An optimal FVS can be found in $\mathcal O(|V(R)|+|A(R)|)$ time.
\end{lemma}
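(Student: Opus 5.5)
The plan has three parts: show every in-neighborhood is a feedback vertex set, get the matching lower bound from a source vertex, and treat out-neighborhoods by symmetry. Fix a round labeling $r_1,\ldots,r_m$ of $R$. Each arc $r_jr_k$ then has a \emph{forward length} $k-j \bmod m$, which lies in $\{1,\ldots,m-1\}$. The key consequence of roundness is \emph{contiguity}: if $r_jr_k\in A(R)$, then $r_j$ has an arc to every vertex cyclically strictly between $r_j$ and $r_k$. This holds because $N^+(r_j)=\{r_{j+1},\ldots,r_{j+d^+(r_j)}\}$ is a cyclic interval starting right after $r_j$.

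\textbf{Every in-neighborhood is an FVS.} Fix $v=r_i$ and suppose $R-N^-(v)$ contains a directed cycle $C$.
\begin{enumerate}
\item The forward lengths along $C$ are positive and sum to a positive multiple of $m$. So, walking around the circle, $C$ must cross every gap between cyclically consecutive labels.
\item In particular, $C$ crosses the gap between $r_{i-1}$ and $r_i$. That is, some arc $xy$ of $C$ has $r_i=v$ in its cyclic range $(x,y]$, with $x\ne v$.
\item If $y=v$, then $x\in N^-(v)$ directly. Otherwise $v$ lies strictly between $x$ and $y$, and contiguity gives $xv\in A(R)$, so again $x\in N^-(v)$.
\end{enumerate}
Either way $x$ was deleted, a contradiction. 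The argument for out-neighborhoods is symmetric: use the in-interval contiguity and the vertex that $C$ enters after crossing the gap just after $v$, or reverse all arcs, which preserves roundness.

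\textbf{Lower bound.} Let $Y$ be any FVS of $R$.
\begin{itemize}
\item If $Y=V(R)$, then $\Delta(Y)\ge\Delta(N^-(v))$ for every $v$, since $\Delta\ge0$.
\item Otherwise $R-Y$ is a nonempty acyclic digraph, so it has a source $u$. Every in-neighbor of $u$ in $R$ lies in $Y$, so $N_R^-(u)\subseteq Y$ and $\Delta(Y)\ge\Delta(N_R^-(u))\ge\min_v\Delta(N_R^-(v))$.
\end{itemize}
Combined with feasibility, this gives equality for in-neighborhoods. Using a sink of $R-Y$ instead gives equality for out-neighborhoods.

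\textbf{Running time.} Compute $\Delta(N^-(v))$ for all $v$ in one pass over the arcs: each arc $uv$ adds $\Delta(u)$ to the sum for $v$. Then take the minimum. This is $\mathcal O(|V(R)|+|A(R)|)$.

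The only step needing care is the winding argument in the feasibility part, namely that a cycle must cross the gap just before $v$. Contiguity of out-neighborhoods then converts that crossing arc into an arc into $v$. The lower bound is routine.
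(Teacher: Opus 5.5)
Your proof is correct and follows essentially the same route as the paper. The arc you find crossing the gap just before $v$ is exactly a backward arc once the round labeling is cut before $v$, and contiguity places its tail in $N_R^-(v)$; your lower bound via a source of $R-Y$ (with sinks or arc reversal for out-neighborhoods) and your one-pass running-time argument also match the paper's. The only difference is in presentation: the paper argues directly that deleting $N_R^-(v)$ removes all backward arcs, so the cut order becomes a topological order, whereas you reach a contradiction from a hypothetical cycle via its winding.
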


\begin{proof}
Fix $v$ and cut a round labeling immediately before $v$.
If an arc $x\to y$ is backward in the resulting linear order,
the forward cyclic interval from the successor of $x$ through $y$
contains $v$. Every vertex in this interval is an out-neighbor of
$x$, so $x\in N_R^-(v)$. Deleting $N_R^-(v)$ therefore removes all
backward arcs and leaves an acyclic digraph.

Choose an optimum FVS $Z$ with $R-Z$ nonempty. Such an optimum
exists because weights are nonnegative and leaving a single vertex
is feasible. Let $s$ be a source of $R-Z$. Then
$N_R^-(s)\subseteq Z$, and hence
$\Delta(N_R^-(s))\le\Delta(Z)$. Together with feasibility of every
in-neighborhood, this proves the first equality. Reversing all arcs
proves the out-neighborhood statement. Finally, all in-neighborhood
weights can be accumulated in one scan of the arcs; output a
minimum-weight one.
\end{proof}

\subsubsection{{The Round Algorithm and Its Analysis}}

The repetition count $k$ is a parameter so that the same subroutine
can be used with different error budgets. All semicomplete calls use
fresh randomness.

\begin{algorithm}[!htbp]
\caption{\textsc{RoundDecompFVS}$(D,w,k)$}
\label{alg:round}
\KwIn{A round decomposition $D=R[S_1,\ldots,S_m]$, weights $w\ge0$,
and an integer $k\ge1$}
\KwOut{A feedback vertex set of $D$}
\For{$i=1$ \KwTo $m$}{
  $F_i\gets\textsc{AmplifiedSEM-FVS}(S_i,w|_{V(S_i)},k)$\;
  $\kappa_i\gets w(F_i)$; $\Delta(r_i)\gets w(V(S_i))-\kappa_i$\;
}
Choose $r_j\in\arg\min_{r\in V(R)}\Delta(N_R^-(r))$ and set $Y\gets N_R^-(r_j)$\;
\Return{$\displaystyle\bigcup_{r_i\in Y}V(S_i)\;\cup\!\bigcup_{r_i\notin Y}F_i$}\;
\end{algorithm}

\begin{theorem}
\label{thm:round-2approx}
Algorithm~\ref{alg:round} always returns an FVS. Whenever every
module call satisfies its factor-$2$ guarantee, its output is a
factor-$2$ approximation for $(D,w)$.
\end{theorem}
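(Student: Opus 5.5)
The plan is to read Algorithm~\ref{alg:round} as an instance of the keep-or-delete construction $X(Y)$ from Section~\ref{sec:comp}, and then invoke Lemma~\ref{lem:composition-keep-delete} with $\alpha=2$ and $\beta=1$. Throughout, the quotient is $R$ and the modules are $H_i=S_i$.

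First I will check feasibility of the module solutions. Each call \textsc{AmplifiedSEM-FVS} always returns a feedback vertex set $F_i$ of $S_i$ (Theorem~\ref{thm:semfvs-2approx} and its amplified version), regardless of whether it succeeds. Since $F_i\subseteq V(S_i)$, we get $\kappa_i=w(F_i)\le w(V(S_i))$, so $\Delta(r_i)\ge0$, as the setup of Lemma~\ref{lem:composition-keep-delete} requires.

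Next I handle the quotient. By Lemma~\ref{lem:round-exact}, applied to the round local tournament $R$ with the nonnegative weights $\Delta$, every in-neighborhood $N_R^-(r)$ is a feedback vertex set of $R$. Moreover, a lightest one attains $\operatorname{OPT}(R,\Delta)$. Hence the set $Y=N_R^-(r_j)$ chosen by the algorithm is a feedback vertex set of $R$ with $\Delta(Y)=\operatorname{OPT}(R,\Delta)$, so the quotient bound holds with $\beta=1$. The returned set is exactly $X(Y)$, so the first part of Lemma~\ref{lem:composition-keep-delete} gives that the output is always a feedback vertex set of $D$. This proves the unconditional claim.

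For the approximation claim, assume every module call meets its guarantee, that is, $w(F_i)\le 2\operatorname{OPT}(S_i,w|_{V(S_i)})$ for all $i$. The second part of Lemma~\ref{lem:composition-keep-delete} with $\alpha=2$ and $\beta=1$ then yields $w(X(Y))\le\max\{2,1\}\operatorname{OPT}(D,w)=2\operatorname{OPT}(D,w)$.

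I expect no real obstacle. The only points needing care are nonnegativity of $\Delta$, which is required to apply Lemma~\ref{lem:round-exact}, and the fact that $Y$ may depend on the random $F_i$. The latter is harmless because both lemmas hold for arbitrary choices of the $F_i$ and deterministic post-processing.
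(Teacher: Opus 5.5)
Your proposal is correct and matches the paper's proof step for step. Both arguments show that each $F_i$ is always feasible, so $\Delta\ge 0$; then Lemma~\ref{lem:round-exact} makes $Y$ an optimal quotient FVS under the realized weights, and Lemma~\ref{lem:composition-keep-delete} with $\alpha=2$, $\beta=1$ yields both unconditional feasibility and the factor-$2$ bound on the success event.
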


\begin{proof}
Every module call returns an FVS $F_i\subseteq V(S_i)$, regardless
of whether its approximation guarantee holds. The non negativity of $w$ gives $\Delta(r_i)=w(V(S_i))-w(F_i)\ge0$.
By Lemma~\ref{lem:round-exact}, the chosen set $Y$ is an optimum FVS
of the quotient under these realized weights. The feasibility part of
Lemma~\ref{lem:composition-keep-delete} now shows that the returned
set is always an FVS of $D$. Whenever all module calls satisfy their
factor-$2$ guarantees, the same lemma applies with $\alpha=2$ and
$\beta=1$, giving the claimed approximation factor.
\end{proof}

\begin{theorem}
\label{thm:rounddecomp-2approx-probhalf}
For a given round decomposition $D=R[S_1,\ldots,S_m]$ on $n$
vertices, nonnegative weights $w$, and $\varepsilon\in(0,1)$,
Algorithm~\ref{alg:round} with
$k=\lceil\log_2(2m/\varepsilon)\rceil$ always returns an FVS $X$,
satisfies
$\Pr[w(X)\le2\operatorname{OPT}(D,w)]\ge1-\varepsilon$,
and runs in time $\mathcal O(n^{17}\log(2m/\varepsilon))$.
\end{theorem}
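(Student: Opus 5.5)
The plan has three parts: feasibility, a union bound over the module calls, and a running-time count.

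\textbf{Feasibility.} By Theorem~\ref{thm:round-2approx}, Algorithm~\ref{alg:round} always returns an FVS, whatever the outcome of the randomized module calls. So only the probability bound and the running time need proof.

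\textbf{Probability.} I will use the amplified semicomplete routine from Appendix~\ref{sc:amplify}. I assume its guarantee is the following. \textsc{AmplifiedSEM-FVS}$(H,w,k)$ runs $\mathcal{S}$ of Theorem~\ref{thm:semfvs-2approx} independently $k$ times and returns the lightest output. Each run is a $2$-approximation with probability at least $1/2$. The lightest output is therefore a $2$-approximation unless all $k$ runs fail, which happens with probability at most $2^{-k}$. The call always returns an FVS and runs in time $\mathcal O(k\,|V(H)|^{17})$ plus the cost of comparing the weights of the candidates.

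With $k=\lceil\log_2(2m/\varepsilon)\rceil$, each of the $m$ module calls fails with probability at most $\varepsilon/(2m)$. A union bound over the $m$ calls shows that the probability that some call fails is at most $\varepsilon/2\le\varepsilon$. The union bound needs no independence between calls, although fresh randomness is used anyway. On the complementary event, every module call meets its factor-$2$ guarantee, so Theorem~\ref{thm:round-2approx} gives $w(X)\le 2\operatorname{OPT}(D,w)$.

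\textbf{Running time.} Let $n_i:=|V(S_i)|$, so that $\sum_i n_i=n$. The module calls cost $\mathcal O(k\sum_i n_i^{17})$. Convexity gives $\sum_i n_i^{17}\le(\sum_i n_i)^{17}=n^{17}$. Computing the values $\kappa_i$ and $\Delta(r_i)$ takes $\mathcal O(n)$ time. Choosing $Y$ by Lemma~\ref{lem:round-exact} takes $\mathcal O(m+|A(R)|)=\mathcal O(n^2)$ time, and assembling the output takes $\mathcal O(n)$ time. The total is therefore $\mathcal O(n^{17}k)$.

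Since $\varepsilon<1$ and $m\ge2$, we have $\log_2(2m/\varepsilon)\ge2$, hence $k=\mathcal O(\log(2m/\varepsilon))$. This gives the bound $\mathcal O(n^{17}\log(2m/\varepsilon))$.

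\textbf{Main obstacle.} The only delicate point is that the appendix's amplification lemma must have exactly the form assumed above: failure probability $2^{-k}$ and time $\mathcal O(k\,n^{17})$. If it is instead stated with a target failure probability $\delta$, I will instantiate it with $\delta=\varepsilon/(2m)$ and argue the same way.
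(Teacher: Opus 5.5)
Your proposal is correct and follows essentially the same proof as the paper. Feasibility comes from Theorem~\ref{thm:round-2approx}. The probability bound is a union bound over the $m$ amplified module calls, each failing with probability at most $2^{-k}\le\varepsilon/(2m)$; Lemma~\ref{lem:tfvs-amplify} has exactly the form you assumed. The running time follows from $\sum_i n_i^{17}\le n^{17}$, with the quotient and assembly steps dominated.
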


\begin{proof}
Feasibility holds by Theorem~\ref{thm:round-2approx}. By
Lemma~\ref{lem:tfvs-amplify}, each amplified module call fails to
satisfy its approximation guarantee with probability at most $2^{-k}$.
The union bound over all the modules gives us a probability of at most
$m2^{-k}\le\varepsilon/2\le\varepsilon$ that any module call fails.
Outside this event, Theorem~\ref{thm:round-2approx} gives the
factor-$2$ guarantee for the returned solution.

For the running time, write $n_i=|V(S_i)|$. Since the modules
partition $V(D)$, we have $\sum_i n_i=n$ and
$\sum_i n_i^{17}\le n^{17}$. The amplified calls consequently take
$\mathcal O(kn^{17})$ time in total. Computing the additional
deletion costs, solving the quotient, and constructing the final FVS
take $\mathcal O(n+m^2)$ additional time, which is dominated by this
bound. Substituting the chosen value of $k$ proves the assertion.
\end{proof}

\subsection{The Non-Round-Decomposable Case}
\label{sec:nonround}

Let $D$ be strong, locally semicomplete, neither semicomplete nor
round-decomposable, and let $v\in V(D)$.
Define the undirected graph $G_v$ on $V(D)\setminus\{v\}$ as follows:
put a loop at $x$ if $v,x$ form a directed $2$-cycle, and put an
edge $xy$ if $v,x,y$ support a directed triangle. Thus every FVS of
$D$ avoiding $v$ is a vertex cover of $G_v$.

Apply Lemma~\ref{lem:vc-lr-decomposition} to $(G_v,w)$, obtaining
$T,w^\triangle,w'$. Extend the weights by
$w^\triangle(v)=0$ and $w'(v)=w(v)$, so
\[
w=w^\triangle+w',\qquad w'(T)=0.
\]
Put $H=D-T$. Since $T$ covers $G_v$, no directed $2$- or $3$-cycle
of $H$ contains $v$.

\begin{lemma}
\label{lem:pivot-residual}
In $H$, the SCC containing $v$ is either a singleton or
round-decomposable. Every other SCC is semicomplete.
\end{lemma}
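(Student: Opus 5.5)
Let $C_v$ be the SCC of $H$ containing $v$. The claim splits into two parts: every SCC other than $C_v$ is semicomplete, and $C_v$ itself is either trivial or round-decomposable.

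\textbf{SCCs other than $C_v$.} Since $D$ is connected, locally semicomplete and not round-decomposable, Lemma~\ref{lem:alpha2-nonround} gives $\alpha(U(D))\le2$. The digraph $H=D-T$ is an induced subdigraph, so it is again locally semicomplete. Independent sets of $U(H)$ are independent in $U(D)$, hence $\alpha(U(H))\le2$. Lemma~\ref{lem:nonpivot-sccs}, applied to $H$ and $v$, then shows that every SCC of $H$ not containing $v$ is semicomplete.

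\textbf{The SCC $C_v$.} Suppose $|C_v|\ge2$ and put $K=H[C_v]$. Then $K$ is strong and locally semicomplete, and it has no directed $2$- or $3$-cycle through $v$. Apply Theorem~\ref{thm:LSD-classification} to the connected digraph $K$ and exclude cases (b) and (c).

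\emph{Excluding (c).} Assume $K$ is semicomplete. A strong semicomplete digraph on at least two vertices is vertex-pancyclic in the sense needed here (Moon/Camion). Concretely, $v$ lies on a $2$-cycle whenever it has an antiparallel neighbour. Otherwise $v$ has an out-neighbour $x$ and an in-neighbour $y$. Take a shortest $x$–$v$ path; its second-to-last vertex $z$ is an in-neighbour of $v$ and is adjacent to $x$ by semicompleteness. If $x\to z$ then $v\to x\to z\to v$ is a short cycle; if $z\to x$, shortening the path contradicts minimality unless the path has length at most $2$, which again yields a triangle. I would make this rigorous by citing the standard fact that in a strong semicomplete digraph on at least $3$ vertices every vertex lies on a $3$-cycle, and that on $2$ vertices it is a $2$-cycle. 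Either way $K$ has a $2$- or $3$-cycle through $v$, a contradiction.

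\emph{Excluding (b).} Assume $K$ is neither semicomplete nor round-decomposable. Since $K$ is strong, Lemma~\ref{lem:pancyclic} puts $v$ on a directed cycle of length $3$, again a contradiction.

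Hence case (a) holds, and $K$ is round-decomposable.

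\textbf{Main obstacle.} The delicate point is the semicomplete case (c): $K$ may contain $2$-cycles, so the argument must cover both the $2$-vertex case and the case where $v$ has only one-way arcs. The cleanest route is the general fact that every vertex of a strong semicomplete digraph on $n\ge3$ vertices lies on a $3$-cycle, or lies on a $2$-cycle. To prove it, take the arc $v\to x$ and a shortest path back from $x$ to $v$; semicompleteness between consecutive path vertices and $v$ forces a triangle or a $2$-cycle at the first place where the arc direction to $v$ flips.

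A related subtlety appears in case (b). Lemma~\ref{lem:pancyclic} only requires $|V(K)|\ge3$, and this is automatic because non-semicomplete digraphs have at least three vertices.
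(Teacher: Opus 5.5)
Your proposal is correct and follows the paper's proof: you handle the non-pivot SCCs exactly as the paper does, via $\alpha(U(H))\le 2$ and Lemma~\ref{lem:nonpivot-sccs}, and the pivot SCC by first ruling out semicompleteness and then invoking Lemma~\ref{lem:pancyclic}. The only difference is the semicomplete step, where the paper takes a shortest cycle $v=x_0\to x_1\to\cdots\to x_{\ell-1}\to v$ (so $\ell\ge 4$) and notes that either arc between $v$ and $x_2$ gives a shorter cycle or a triangle; your first-flip argument along a return path from an out-neighbour of $v$ works equally well, but drop the earlier $z$-based sketch in your case (c) paragraph, because when that path has length at least $3$ minimality forces $z\to x$, which yields no contradiction.
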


\begin{proof}
Let $C$ be the subdigraph induced by the SCC containing $v$.
There is nothing to prove about $C$ if it is a singleton, so assume
$|V(C)|>1$. Since $C$ is strong, it contains a directed cycle
through $v$. Choose a shortest such cycle,
$v=x_0\to x_1\to\cdots\to x_{\ell-1}\to v$.
The construction of $H$ excludes lengths $2$ and $3$, so $\ell\ge4$.
If $C$ were semicomplete, $v$ and $x_2$ would be adjacent. The arc
$v\to x_2$ would shorten the chosen cycle, whereas $x_2\to v$
would form the triangle $v\to x_1\to x_2\to v$. Both are
impossible, so $C$ is not semicomplete. It remains locally
semicomplete because it is induced in $D$. If it were also
non-round-decomposable, vertex pancyclicity
(Lemma~\ref{lem:pancyclic}) would give a triangle through $v$.
Therefore $C$ is round-decomposable.

For the other SCCs, Lemma~\ref{lem:alpha2-nonround} gives
$\alpha(U(D))\le2$. Vertex deletion cannot increase the independence
number, so $\alpha(U(H))\le2$ as well. Applying
Lemma~\ref{lem:nonpivot-sccs} to $H$ shows that every SCC not
containing $v$ is semicomplete.
\end{proof}

We can consequently solve all SCCs of $H$ using only the semicomplete
and round-decomposable routines. In particular, we do not recurse on  arbitrary locally semicomplete instances.

\begin{algorithm}[!htbp]
\caption{\textsc{PivotCandidate}$(D,w,v,k)$}
\label{alg:pivot-candidate}
\KwIn{A strong LSD $D$ that is neither semicomplete nor round-decomposable,
weights $w\ge0$, a pivot $v$, and an integer $k\ge1$}
\KwOut{A feedback vertex set of $D$}
Construct $G_v$ and compute $(T,w^\triangle,w')$ by local ratio as above\;
$H\gets D-T$; $F\gets\emptyset$\;
\ForEach{SCC vertex set $Q$ of $H$ with $|Q|\ge2$}{
  \eIf{$v\in Q$}{
    Compute a round decomposition of $H[Q]$\;
    $F_Q\gets\textsc{RoundDecompFVS}(H[Q],w'|_Q,k)$\;
  }{
    $F_Q\gets\textsc{AmplifiedSEM-FVS}(H[Q],w'|_Q,k)$\;
  }
  $F\gets F\cup F_Q$\;
}
\Return{$T\cup F$}\;
\end{algorithm}

\begin{lemma}
\label{lem:pivot-candidate}
Algorithm~\ref{alg:pivot-candidate} always returns an FVS $X_v$ of
$D$. If $v\notin X^\star$ for some optimum FVS $X^\star$ of
$(D,w)$, then $w(X_v)\le2w(X^\star)$ whenever all its amplified
semicomplete calls satisfy their factor-$2$ guarantees.
\end{lemma}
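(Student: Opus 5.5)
Feasibility comes first. By Lemma~\ref{lem:pivot-residual}, every SCC of $H=D-T$ is a singleton, round-decomposable (the one containing $v$), or semicomplete. Singletons carry no cycles; there are no loops since $D$ is simple. For the SCC of $v$, Theorem~\ref{thm:round-2approx} shows that \textsc{RoundDecompFVS} always returns an FVS, and every amplified semicomplete call returns an FVS of its component. Lemma~\ref{lem:scc} then gives that $F$ is an FVS of $H$. Since any cycle of $D$ either meets $T$ or lies in $H$, the set $T\cup F$ is an FVS of $D$.

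For the approximation, fix an optimum $X^\star$ with $v\notin X^\star$ and assume all semicomplete calls succeed. Write $w(X_v)\le w(T)+w(F)$ and split each term along $w=w^\triangle+w'$. Since $w'(T)=0$, we get $w(T)=w^\triangle(T)$. The set $X^\star\setminus\{v\}=X^\star$ is an FVS of $D$ avoiding $v$, hence a vertex cover of $G_v$. Lemma~\ref{lem:vc-lr-decomposition} applied with $Y=T$ and $Z=X^\star$ then gives $w^\triangle(T)\le 2w^\triangle(X^\star)$; this step uses $w^\triangle(v)=0$. For $F$, bound $w^\triangle(F)$ and $w'(F)$ separately. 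The $w^\triangle$ part is again handled by the local-ratio guarantee with $Y=F\cup T$, since that lemma allows arbitrary $Y$. So I would bound the whole thing together:
\[
w^\triangle(T\cup F)\le 2w^\triangle(X^\star).
\]

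For the $w'$ part, note that $F$ is computed with weights $w'$. Every successful component call is a $2$-approximation under $w'$. For the SCC of $v$, Theorem~\ref{thm:round-2approx} applies provided all of its module calls succeed, and the SCC-combination argument of Corollary~\ref{cor:scc-approx} gives $w'(F)\le 2\operatorname{OPT}(H,w')$. Then $X^\star\setminus T$ is an FVS of $H$, so $\operatorname{OPT}(H,w')\le w'(X^\star\setminus T)\le w'(X^\star)$. Combining,
\[
w(X_v)= w^\triangle(T\cup F)+w'(T\cup F)\le 2w^\triangle(X^\star)+2w'(X^\star)=2w(X^\star).
\]

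The main subtlety is making sure the local-ratio guarantee of Lemma~\ref{lem:vc-lr-decomposition} really holds for arbitrary $Y$, so that it covers $T\cup F$ and not only $T$. This is stated in the lemma, and it is exactly what lets us avoid charging $F$ twice. A second point to check is that $v$ itself does not break the vertex-cover argument. The vertex $v$ is not a vertex of $G_v$ and has $w^\triangle(v)=0$. If $v$ lies in $F$, it is paid only through $w'$, which is covered by the second bound. Moreover $v\notin X^\star$ is used only to ensure that $X^\star$ restricted to $V(G_v)$ is a vertex cover.
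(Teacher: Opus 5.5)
Your proof is correct and is essentially the paper's argument. Feasibility comes from Lemma~\ref{lem:pivot-residual} and Lemma~\ref{lem:scc}; you then split $w=w^\triangle+w'$, bound $w^\triangle$ of the whole output $X_v=T\cup F$ in one step via the arbitrary-$Y$ local-ratio guarantee (using $w^\triangle(v)=0$), and get $w'(X_v)=w'(F)\le 2\operatorname{OPT}(H,w')\le 2w'(X^\star)$ via Corollary~\ref{cor:scc-approx}, exactly as the paper does (your preliminary separate bound on $w^\triangle(T)$ is superfluous once you bound $w^\triangle(T\cup F)$ jointly).
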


\begin{proof}
By Lemma~\ref{lem:pivot-residual}, we know every nontrivial SCC of $H$ is either round-decomposable or semicomplete. As each subroutine
always returns an FVS, and singleton SCCs contain no cycles, thus their outputs form an FVS $F$ of $H$ by
Lemma~\ref{lem:scc}. Hence $X_v=T\cup F$ is always feasible for $D$.

Recall the construction of $G_v$ at the very beginning of \Cref{sec:nonround}. Now fix an optimum FVS $X^\star$ avoiding $v$. Every $2$- or $3$-cycle
through $v$ must hit by $X^\star$ through another vertex other than $v$, so $X^\star$
is a vertex cover of $G_v$. Lemma~\ref{lem:vc-lr-decomposition},
extended by $w^\triangle(v)=0$, consequently gives
$w^\triangle(X_v)\le2w^\triangle(X^\star)$.
On the stated success event, the round-decomposable routine and the
semicomplete routine approximate FVS of each SCC within factor $2$.
Corollary~\ref{cor:scc-approx} therefore gives
$w'(F)\le2\operatorname{OPT}(H,w')\le2w'(X^\star)$, where the
last inequality holds because $X^\star\cap V(H)$ is an FVS of $H$.
These two bounds compare the charged and residual costs with the
same optimum $X^\star$. Since $T$ has zero residual weight, they add
to give
\[
w(X_v)=w^\triangle(X_v)+w'(F)
\le2w^\triangle(X^\star)+2w'(X^\star)=2w(X^\star).
\]
\end{proof}


\subsection{The Full Algorithm}
\label{subsec:alg-corr}

We now combine the three cases and process each SCC independently.
For a strong input that is neither semicomplete nor round-decomposable,
we try every pivot, always starting with the original weights of that
SCC, and keep the lightest solution. 

\begin{algorithm}[!htbp]
\caption{\textsc{LSD-FVS}$(D,w,\delta)$}
\label{alg:lsd-fvs}
\KwIn{An LSD $D$ on $n$ vertices, weights $w\ge0$, and $\delta\in(0,1)$}
\KwOut{A feedback vertex set of $D$}
\lIf{$n\le1$}{\Return{$\emptyset$}}
$k\gets\lceil\log_2(2n^2/\delta)\rceil$; $X\gets\emptyset$\;
\ForEach{SCC vertex set $C$ of $D$ with $|C|\ge2$}{
  $J\gets D[C]$\;
  \eIf{$J$ is semicomplete}{
    $X_C\gets\textsc{AmplifiedSEM-FVS}(J,w|_C,k)$\;
  }{
    Test round-decomposability and compute a decomposition if one exists\;
    \eIf{$J$ is round-decomposable}{
      $X_C\gets\textsc{RoundDecompFVS}(J,w|_C,k)$\;
    }{
      \ForEach{$v\in C$}{
        $X_v\gets\textsc{PivotCandidate}(J,w|_C,v,k)$ using fresh randomness\;
      }
      Choose $v^\star\in\arg\min_{v\in C}w(X_v)$ and set $X_C\gets X_{v^\star}$\;
    }
  }
  $X\gets X\cup X_C$\;
}
\Return{$X$}\;
\end{algorithm}

\subsubsection{{Correctness, Success Probability, and Running Time}}

\polyapprox*

\begin{proof}
We analyze Algorithm~\ref{alg:lsd-fvs}. First of all, every candidate is feasible
by Lemma~\ref{lem:pivot-candidate} and the guarantees of the
semicomplete and round-decomposable routines. Their union over the
SCCs is therefore an FVS of $D$ (\Cref{lem:scc}).

Suppose all amplified semicomplete calls satisfy their factor-$2$
guarantees. A strong component that is semicomplete or
round-decomposable is then approximated within factor $2$ for FVS.
For any other component $J$, choose an optimum FVS $X^\star$ that
does not contain all vertices. Such an optimum exists since weights
are nonnegative and a single vertex is acyclic. At least one enumerated
pivot $v$ therefore lies outside $X^\star$, and
Lemma~\ref{lem:pivot-candidate} gives
$w(X_v)\le2\operatorname{OPT}(J,w)$.
Thus we take the lightest solution among all the enumerated pivots. Summing over SCCs proves the
global factor-$2$ guarantee by Lemma~\ref{lem:scc}.

Now we bound the number of amplified semicomplete calls. Within a pivot
candidate on a component of size $a$, the inputs to these calls are
pairwise disjoint: they are the nonpivot SCCs and the semicomplete
modules of the SCC containing the pivot. Hence there are at most $a$ such calls.
There are $a$ pivots, so this component contributes at most $a^2$
calls. The semicomplete and round-decomposable subroutines also use at
most $a^2$ calls. If the original SCC sizes are $a_1,\ldots,a_s$,
the total is at most $\sum_j a_j^2\le n^2$.
Each call uses fresh randomness and, conditional on its input, fails
with probability at most $2^{-k}$. Therefore
\[
\Pr[\text{some amplified call fails}]
\le n^2 2^{-k}\le\delta/2\le\delta.
\]
Thus we have the claimed success probability.

For the running time, let $b_1,\ldots,b_q$ be the semicomplete input
sizes within one pivot candidate on $a$ vertices. Since
$\sum_i b_i\le a$, these calls take
$\mathcal O\!\left(k\sum_i b_i^{17}\right)
\subseteq\mathcal O(ka^{17})$
time. Constructing $G_v$, performing local ratio, and computing SCCs
take $\mathcal O(a^3)$ time. The round-decomposition algorithm
of~\cite[Proposition~3.9]{DBLP:journals/dm/Bang-JensenGGV97}
admits an $\mathcal O(a^4)$ implementation.
The exact quotient solver adds
at most $\mathcal O(a^2)$ time. All these operations are dominated
by the semicomplete calls, so all $a$ candidates take
$\mathcal O(ka^{18})$ time. 
Summing over original SCCs and using
$\sum_j a_j^{18}\le n^{18}$ gives
$\mathcal O(kn^{18})=\mathcal O(n^{18}\log(2n/\delta))$.

Setting $\delta=1/2$ gives success probability at least $1/2$ in
$\mathcal O(n^{18}\log(2n))$ time.
\end{proof}

\section{A Randomized 2-Approximation for Quasi-Transitive Digraphs}
\label{sec:qt-2approx}

We apply Lemma~\ref{lem:composition-keep-delete} recursively to the
canonical decomposition of Theorem~\ref{thm:qt-canonical-decomposition}.
For the original input size $N\ge1$ and $\delta\in(0,1)$, let
$\mathcal S^{\mathrm{amp}}(H,w;N,\delta)$ run $\mathcal S(H,w)$
independently $\lceil\log_2(4N/\delta)\rceil$ times and return the
lightest output. By Lemma~\ref{lem:tfvs-amplify}, it is always feasible,
has failure probability at most $\delta/(4N)$, and takes
$\mathcal O(|V(H)|^{17}\log(2N/\delta))$ time.

\subsection{The Algorithm}

We now give the recursive algorithm. The parameter $N$ denotes the number
of vertices in the original input and is passed unchanged through all
recursive calls. It is used only for amplification.

\begin{algorithm}[!htbp]
\caption{\textsc{QT-FVS}$(D,w,N,\delta)$}
\label{alg:qt-fvs}
\KwIn{A quasi-transitive digraph $D$, nonnegative weights $w$,
original input size $N$, and $\delta\in(0,1)$}
\KwOut{A feedback vertex set of $D$}
\lIf{$|V(D)|\le1$}{\Return{$\emptyset$}}
Compute the SCC vertex sets $C_1,\ldots,C_s$ of $D$\;
\If{$s\ge2$}{
  \For{$j=1$ \KwTo $s$}{
    $X_j\gets\textsc{QT-FVS}(D[C_j],w|_{C_j},N,\delta)$\;
  }
  \Return{$\bigcup_{j=1}^sX_j$}\;
}
\lIf{$D$ is semicomplete}{\Return{$\mathcal S^{\mathrm{amp}}(D,w;N,\delta)$}}
Compute $D=Q[H_1,\ldots,H_t]$, with $V(Q)=\{q_1,\ldots,q_t\}$,
as in Theorem~\ref{thm:qt-canonical-decomposition}\;
\For{$i=1$ \KwTo $t$}{
  $F_i\gets\textsc{QT-FVS}(H_i,w|_{V(H_i)},N,\delta)$\;
  $W_i\gets w(V(H_i))$; $\kappa_i\gets w(F_i)$; $\Delta(q_i)\gets W_i-\kappa_i$\;
}
$Y\gets\mathcal S^{\mathrm{amp}}(Q,\Delta;N,\delta)$\;
\Return{$\displaystyle\bigcup_{q_i\in Y}V(H_i)\;\cup\!\bigcup_{q_i\notin Y}F_i$}\;
\end{algorithm}

\subsection{Correctness, Approximation Guarantee, and Running Time}

\qtapprox*

\begin{proof}
We analyze Algorithm~\ref{alg:qt-fvs}, keeping $N$ fixed throughout.
We prove inductively that each recursive call is always
feasible and is a factor-$2$ approximation whenever all amplified
semicomplete calls in its subtree satisfy their guarantees.
The case $|V(D)|\le1$ is immediate. For a non-strong input, both
assertions follow from the induction hypothesis and Lemma~\ref{lem:scc}.
For a strong semicomplete input, they are precisely the guarantees of
$\mathcal S^{\mathrm{amp}}$.

Otherwise, Theorem~\ref{thm:qt-canonical-decomposition} gives
$D=Q[H_1,\ldots,H_t]$ with a semicomplete quotient and proper,
nonempty quasi-transitive modules. Induction supplies feasible $F_i$,
so $\Delta(q_i)=w(V(H_i))-w(F_i)\ge0$.
Lemma~\ref{lem:composition-keep-delete} gives feasibility
unconditionally. When the module subtrees and quotient call succeed,
the same lemma with $\alpha=\beta=2$ gives
$w(X)\le2\operatorname{OPT}(D,w)$. This proves the invariant.

Every internal node of the recursion tree has at least two children
whose nonempty vertex sets partition its input. There are at most
$N$ leaves and hence at most $2N-1$ nodes. Each node makes at most
one amplified semicomplete call, either on its input or on its quotient.
Each such call uses fresh randomness. Conditional on all preceding
choices, its input weights are fixed and its failure probability is
at most $\delta/(4N)$, even when those weights depend on the outcomes of earlier calls. The same bound therefore holds unconditionally, and
the union bound gives
\[
\Pr[\text{some semicomplete call fails}]
\le(2N-1)\frac{\delta}{4N}<\delta.
\]
The invariant proves the claimed approximation probability.

For the running time, the semicomplete inputs at leaves have total
size at most $N$. At each node using a canonical decomposition, the
quotient size equals its number of children. The sum of child counts
over all internal nodes is the number of tree edges, at most $2N-2$.
Thus, if $b_1,\ldots,b_q$ are the sizes of all semicomplete inputs,
$\sum_i b_i\le3N-2$ and $b_i\le N$. Their total running time is
\[
\mathcal O\!\left(\log(2N/\delta)\sum_i b_i^{17}\right)
\subseteq\mathcal O(N^{17}\log(2N/\delta)).
\]
SCC computation and construction of the canonical decomposition time is dominated by above bound.

Taking $\delta=1/2$ gives a randomized polynomial-time factor-$2$
approximation with success probability at least $1/2$.
\end{proof}

\section*{Acknowledgements}

The author thanks Prof.~Saket Saurabh and Sobyasachi Chatterjee
for listening to the presentation of the algorithm and identifying
errors in a preliminary version of this work. The author is
particularly grateful to Prof.~Saket Saurabh for technical insights
that substantially simplified the arguments and shortened the
original 50-page draft. These insights also helped reduce the
polynomial exponent in the running time for locally semicomplete
digraphs from $500$ to $18$.

\section*{Declaration of AI Assistance}

ChatGPT (5.5 and 5.6) was used to assist in refining the language and restructuring the exposition.

\bibliography{ref}

@inproceedings{DBLP:conf/soda/LokshtanovMMPP020,
  author       = {Daniel Lokshtanov and
                  Pranabendu Misra and
                  Joydeep Mukherjee and
                  Fahad Panolan and
                  Geevarghese Philip and
                  Saket Saurabh},
  editor       = {Shuchi Chawla},
  title        = {2-Approximating Feedback Vertex Set in Tournaments},
  booktitle    = {Proceedings of the 2020 {ACM-SIAM} Symposium on Discrete Algorithms,
                  {SODA} 2020, Salt Lake City, UT, USA, January 5-8, 2020},
  pages        = {1010--1018},
  publisher    = {{SIAM}},
  year         = {2020},
  url          = {https://doi.org/10.1137/1.9781611975994.61},
  doi          = {10.1137/1.9781611975994.61},
  bibsource    = {dblp computer science bibliography, https://dblp.org}
}

@article{DBLP:journals/algorithmica/Bang-JensenMS16,
  author       = {J{\o}rgen Bang{-}Jensen and
                  Alessandro Maddaloni and
                  Saket Saurabh},
  title        = {Algorithms and Kernels for Feedback Set Problems in Generalizations
                  of Tournaments},
  journal      = {Algorithmica},
  volume       = {76},
  number       = {2},
  pages        = {320--343},
  year         = {2016},
  url          = {https://doi.org/10.1007/s00453-015-0038-2},
  doi          = {10.1007/S00453-015-0038-2},
  bibsource    = {dblp computer science bibliography, https://dblp.org}
}

@article{DBLP:journals/dm/Bang-JensenGGV97,
  author       = {J{\o}rgen Bang{-}Jensen and
                  Yubao Guo and
                  Gregory Z. Gutin and
                  Lutz Volkmann},
  title        = {A classification of locally semicomplete digraphs},
  journal      = {Discret. Math.},
  volume       = {167-168},
  pages        = {101--114},
  year         = {1997},
  url          = {https://doi.org/10.1016/S0012-365X(96)00219-1},
  doi          = {10.1016/S0012-365X(96)00219-1},
  bibsource    = {dblp computer science bibliography, https://dblp.org}
}

@article{DBLP:journals/jgt/Bang-JensenG98,
  author       = {J{\o}rgen Bang{-}Jensen and
                  Gregory Z. Gutin},
  title        = {Generalizations of tournaments: {A} survey},
  journal      = {J. Graph Theory},
  volume       = {28},
  number       = {4},
  pages        = {171--202},
  year         = {1998},
  url          = {https://doi.org/10.1002/(SICI)1097-0118(199808)28:4\<171::AID-JGT1\>3.0.CO;2-G},
  doi          = {10.1002/(SICI)1097-0118(199808)28:4\<171::AID-JGT1\>3.0.CO;2-G},
  bibsource    = {dblp computer science bibliography, https://dblp.org}
}

@article{DBLP:journals/jgt/Bang-Jensen90,
  author       = {J{\o}rgen Bang{-}Jensen},
  title        = {Locally semicomplete digraphs: {A} generalization of tournaments},
  journal      = {J. Graph Theory},
  volume       = {14},
  number       = {3},
  pages        = {371--390},
  year         = {1990},
  url          = {https://doi.org/10.1002/jgt.3190140310},
  doi          = {10.1002/JGT.3190140310},
  bibsource    = {dblp computer science bibliography, https://dblp.org}
}

@article{DBLP:journals/dm/Bang-Jensen92,
  author       = {J{\o}rgen Bang{-}Jensen},
  title        = {On the structure of locally semicomplete digraphs},
  journal      = {Discret. Math.},
  volume       = {100},
  number       = {1-3},
  pages        = {243--265},
  year         = {1992},
  url          = {https://doi.org/10.1016/0012-365X(92)90645-V},
  doi          = {10.1016/0012-365X(92)90645-V},
  bibsource    = {dblp computer science bibliography, https://dblp.org}
}

@inproceedings{DBLP:conf/latin/GuptaMSS24,
  author       = {Sushmita Gupta and
                  Sounak Modak and
                  Saket Saurabh and
                  Sanjay Seetharaman},
  editor       = {Jos{\'{e}} A. Soto and
                  Andreas Wiese},
  title        = {Quick-Sort Style Approximation Algorithms for Generalizations of Feedback
                  Vertex Set in Tournaments},
  booktitle    = {{LATIN} 2024: Theoretical Informatics - 16th Latin American Symposium,
                  Puerto Varas, Chile, March 18-22, 2024, Proceedings, Part {I}},
  series       = {Lecture Notes in Computer Science},
  volume       = {14578},
  pages        = {225--240},
  publisher    = {Springer},
  year         = {2024},
  url          = {https://doi.org/10.1007/978-3-031-55598-5\_15},
  doi          = {10.1007/978-3-031-55598-5\_15},
  bibsource    = {dblp computer science bibliography, https://dblp.org}
}

@book{DBLP:books/daglib/0022205,
  author       = {J{\o}rgen Bang{-}Jensen and
                  Gregory Z. Gutin},
  title        = {Digraphs - Theory, Algorithms and Applications, Second Edition},
  series       = {Springer Monographs in Mathematics},
  publisher    = {Springer},
  year         = {2009},
  isbn         = {978-1-84800-997-4},
  bibsource    = {dblp computer science bibliography, https://dblp.org}
}

@inproceedings{DBLP:conf/icalp/EsmerK25,
  author       = {Baris Can Esmer and
                  Ariel Kulik},
  editor       = {Keren Censor{-}Hillel and
                  Fabrizio Grandoni and
                  Jo{\"{e}}l Ouaknine and
                  Gabriele Puppis},
  title        = {Sampling with a Black Box: Faster Parameterized Approximation Algorithms
                  for Vertex Deletion Problems},
  booktitle    = {52nd International Colloquium on Automata, Languages, and Programming,
                  {ICALP} 2025, Aarhus, Denmark, July 8-11, 2025},
  series       = {LIPIcs},
  volume       = {334},
  pages        = {39:1--39:20},
  publisher    = {Schloss Dagstuhl - Leibniz-Zentrum f{\"{u}}r Informatik},
  year         = {2025},
  url          = {https://doi.org/10.4230/LIPIcs.ICALP.2025.39},
  doi          = {10.4230/LIPICS.ICALP.2025.39},
  bibsource    = {dblp computer science bibliography, https://dblp.org}
}

@article{DBLP:journals/jacm/ChenLLOR08,
  author       = {Jianer Chen and
                  Yang Liu and
                  Songjian Lu and
                  Barry O'Sullivan and
                  Igor Razgon},
  title        = {A fixed-parameter algorithm for the directed feedback vertex set problem},
  journal      = {J. {ACM}},
  volume       = {55},
  number       = {5},
  pages        = {21:1--21:19},
  year         = {2008},
  url          = {https://doi.org/10.1145/1411509.1411511},
  doi          = {10.1145/1411509.1411511},
  bibsource    = {dblp computer science bibliography, https://dblp.org}
}

@book{DBLP:books/fm/GareyJ79,
  author       = {M. R. Garey and
                  David S. Johnson},
  title        = {Computers and Intractability: {A} Guide to the Theory of NP-Completeness},
  publisher    = {W. H. Freeman},
  year         = {1979},
  isbn         = {0-7167-1044-7},
  bibsource    = {dblp computer science bibliography, https://dblp.org}
}

@article{DBLP:journals/siamdm/BafnaBF99,
  author       = {Vineet Bafna and
                  Piotr Berman and
                  Toshihiro Fujito},
  title        = {A 2-Approximation Algorithm for the Undirected Feedback Vertex Set
                  Problem},
  journal      = {{SIAM} J. Discret. Math.},
  volume       = {12},
  number       = {3},
  pages        = {289--297},
  year         = {1999},
  url          = {https://doi.org/10.1137/S0895480196305124},
  doi          = {10.1137/S0895480196305124},
  bibsource    = {dblp computer science bibliography, https://dblp.org}
}

@article{DBLP:journals/algorithmica/EvenNSS98,
  author       = {Guy Even and
                  Joseph Naor and
                  Baruch Schieber and
                  Madhu Sudan},
  title        = {Approximating Minimum Feedback Sets and Multicuts in Directed Graphs},
  journal      = {Algorithmica},
  volume       = {20},
  number       = {2},
  pages        = {151--174},
  year         = {1998},
  url          = {https://doi.org/10.1007/PL00009191},
  doi          = {10.1007/PL00009191},
  bibsource    = {dblp computer science bibliography, https://dblp.org}
}

@inproceedings{DBLP:conf/focs/CyganNPPRW11,
  author       = {Marek Cygan and
                  Jesper Nederlof and
                  Marcin Pilipczuk and
                  Michal Pilipczuk and
                  Johan M. M. van Rooij and
                  Jakub Onufry Wojtaszczyk},
  editor       = {Rafail Ostrovsky},
  title        = {Solving Connectivity Problems Parameterized by Treewidth in Single
                  Exponential Time},
  booktitle    = {{IEEE} 52nd Annual Symposium on Foundations of Computer Science, {FOCS}
                  2011, Palm Springs, CA, USA, October 22-25, 2011},
  pages        = {150--159},
  publisher    = {{IEEE} Computer Society},
  year         = {2011},
  url          = {https://doi.org/10.1109/FOCS.2011.23},
  doi          = {10.1109/FOCS.2011.23},
  bibsource    = {dblp computer science bibliography, https://dblp.org}
}

@article{DBLP:journals/ipl/KociumakaP14,
  author       = {Tomasz Kociumaka and
                  Marcin Pilipczuk},
  title        = {Faster deterministic Feedback Vertex Set},
  journal      = {Inf. Process. Lett.},
  volume       = {114},
  number       = {10},
  pages        = {556--560},
  year         = {2014},
  url          = {https://doi.org/10.1016/j.ipl.2014.05.001},
  doi          = {10.1016/J.IPL.2014.05.001},
  bibsource    = {dblp computer science bibliography, https://dblp.org}
}

@inproceedings{DBLP:conf/ictcs/Razgon07,
  author       = {Igor Razgon},
  editor       = {Giuseppe F. Italiano and
                  Eugenio Moggi and
                  Luigi Laura},
  title        = {Computing Minimum Directed Feedback Vertex Set in O(1.9977\({}^{\mbox{n}}\))},
  booktitle    = {Theoretical Computer Science, 10th Italian Conference, {ICTCS} 2007,
                  Rome, Italy, October 3-5, 2007, Proceedings},
  pages        = {70--81},
  publisher    = {World Scientific},
  year         = {2007},
  bibsource    = {dblp computer science bibliography, https://dblp.org}
}

@article{DBLP:journals/jco/XiaoN15,
  author       = {Mingyu Xiao and
                  Hiroshi Nagamochi},
  title        = {An improved exact algorithm for undirected feedback vertex set},
  journal      = {J. Comb. Optim.},
  volume       = {30},
  number       = {2},
  pages        = {214--241},
  year         = {2015},
  url          = {https://doi.org/10.1007/s10878-014-9737-x},
  doi          = {10.1007/S10878-014-9737-X},
  bibsource    = {dblp computer science bibliography, https://dblp.org}
}

@article{DBLP:journals/combinatorica/ReedRST96,
  author       = {Bruce A. Reed and
                  Neil Robertson and
                  Paul D. Seymour and
                  Robin Thomas},
  title        = {Packing Directed Circuits},
  journal      = {Comb.},
  volume       = {16},
  number       = {4},
  pages        = {535--554},
  year         = {1996},
  url          = {https://doi.org/10.1007/BF01271272},
  doi          = {10.1007/BF01271272},
  bibsource    = {dblp computer science bibliography, https://dblp.org}
}

@inproceedings{DBLP:conf/coco/Karp72,
  author       = {Richard M. Karp},
  editor       = {Raymond E. Miller and
                  James W. Thatcher},
  title        = {Reducibility Among Combinatorial Problems},
  booktitle    = {Proceedings of a symposium on the Complexity of Computer Computations,
                  held March 20-22, 1972, at the {IBM} Thomas J. Watson Research Center,
                  Yorktown Heights, New York, {USA}},
  series       = {The {IBM} Research Symposia Series},
  pages        = {85--103},
  publisher    = {Plenum Press, New York},
  year         = {1972},
  url          = {https://doi.org/10.1007/978-1-4684-2001-2\_9},
  doi          = {10.1007/978-1-4684-2001-2\_9},
  bibsource    = {dblp computer science bibliography, https://dblp.org}
}

@article{DBLP:journals/jacm/FominGLS19,
  author       = {Fedor V. Fomin and
                  Serge Gaspers and
                  Daniel Lokshtanov and
                  Saket Saurabh},
  title        = {Exact Algorithms via Monotone Local Search},
  journal      = {J. {ACM}},
  volume       = {66},
  number       = {2},
  pages        = {8:1--8:23},
  year         = {2019},
  url          = {https://doi.org/10.1145/3284176},
  doi          = {10.1145/3284176},
  bibsource    = {dblp computer science bibliography, https://dblp.org}
}

@article{DBLP:journals/toc/GuruswamiL16,
  author       = {Venkatesan Guruswami and
                  Euiwoong Lee},
  title        = {Simple Proof of Hardness of Feedback Vertex Set},
  journal      = {Theory Comput.},
  volume       = {12},
  number       = {1},
  pages        = {1--11},
  year         = {2016},
  url          = {https://doi.org/10.4086/toc.2016.v012a006},
  doi          = {10.4086/TOC.2016.V012A006},
  bibsource    = {dblp computer science bibliography, https://dblp.org}
}

@article{DBLP:journals/tcs/Zuylen11,
  author       = {Anke van Zuylen},
  title        = {Linear programming based approximation algorithms for feedback set
                  problems in bipartite tournaments},
  journal      = {Theor. Comput. Sci.},
  volume       = {412},
  number       = {23},
  pages        = {2556--2561},
  year         = {2011},
  url          = {https://doi.org/10.1016/j.tcs.2010.10.047},
  doi          = {10.1016/J.TCS.2010.10.047},
  bibsource    = {dblp computer science bibliography, https://dblp.org}
}

@inproceedings{DBLP:conf/approx/Sun24,
  author       = {Hao Sun},
  editor       = {Amit Kumar and
                  Noga Ron{-}Zewi},
  title        = {A Constant Factor Approximation for Directed Feedback Vertex Set in
                  Graphs of Bounded Genus},
  booktitle    = {Approximation, Randomization, and Combinatorial Optimization. Algorithms
                  and Techniques, {APPROX/RANDOM} 2024, London School of Economics,
                  London, UK, August 28-30, 2024},
  series       = {LIPIcs},
  volume       = {317},
  pages        = {18:1--18:20},
  publisher    = {Schloss Dagstuhl - Leibniz-Zentrum f{\"{u}}r Informatik},
  year         = {2024},
  url          = {https://doi.org/10.4230/LIPIcs.APPROX/RANDOM.2024.18},
  doi          = {10.4230/LIPICS.APPROX/RANDOM.2024.18},
  bibsource    = {dblp computer science bibliography, https://dblp.org}
}

@article{DBLP:journals/ajc/GuptaGK0R10,
  author       = {Arvind Gupta and
                  Gregory Z. Gutin and
                  Mehdi Karimi and
                  Eun Jung Kim and
                  Arash Rafiey},
  title        = {Minimum cost homomorphisms to locally semicomplete digraphs and quasi-transitive
                  digraphs},
  journal      = {Australas. {J} Comb.},
  volume       = {46},
  pages        = {217--232},
  year         = {2010},
  url          = {http://ajc.maths.uq.edu.au/pdf/46/ajc\_v46\_p217.pdf},
  bibsource    = {dblp computer science bibliography, https://dblp.org}
}

@article{DBLP:journals/jgt/Bang-JensenCM17,
  author       = {J{\o}rgen Bang{-}Jensen and
                  Tilde My Christiansen and
                  Alessandro Maddaloni},
  title        = {Disjoint Paths in Decomposable Digraphs},
  journal      = {J. Graph Theory},
  volume       = {85},
  number       = {2},
  pages        = {545--567},
  year         = {2017},
  url          = {https://doi.org/10.1002/jgt.22090},
  doi          = {10.1002/JGT.22090},
  bibsource    = {dblp computer science bibliography, https://dblp.org}
}

@article{DBLP:journals/jgt/Bang-JensenGY03,
  author       = {J{\o}rgen Bang{-}Jensen and
                  Gregory Z. Gutin and
                  Anders Yeo},
  title        = {Steiner type problems for digraphs that are locally semicomplete or
                  extended semicomplete},
  journal      = {J. Graph Theory},
  volume       = {44},
  number       = {3},
  pages        = {193--207},
  year         = {2003},
  url          = {https://doi.org/10.1002/jgt.10140},
  doi          = {10.1002/JGT.10140},
  bibsource    = {dblp computer science bibliography, https://dblp.org}
}

@article{DBLP:journals/jgt/GuoV96,
  author       = {Yubao Guo and
                  Lutz Volkmann},
  title        = {Locally semicomplete digraphs that are complementary \emph{m}-pancyclic},
  journal      = {J. Graph Theory},
  volume       = {21},
  number       = {2},
  pages        = {121--136},
  year         = {1996},
  url          = {https://doi.org/10.1002/(SICI)1097-0118(199602)21:2\<121::AID-JGT2\>3.0.CO;2-T},
  doi          = {10.1002/(SICI)1097-0118(199602)21:2\<121::AID-JGT2\>3.0.CO;2-T},
  bibsource    = {dblp computer science bibliography, https://dblp.org}
}

@article{DBLP:journals/jgt/Bang-JensenGV96,
  author       = {J{\o}rgen Bang{-}Jensen and
                  Yubao Guo and
                  Lutz Volkmann},
  title        = {Weakly Hamiltonian-connected locally semicomplete digraphs},
  journal      = {J. Graph Theory},
  volume       = {21},
  number       = {2},
  pages        = {163--172},
  year         = {1996},
  url          = {https://doi.org/10.1002/(SICI)1097-0118(199602)21:2\<163::AID-JGT5\>3.0.CO;2-P},
  doi          = {10.1002/(SICI)1097-0118(199602)21:2\<163::AID-JGT5\>3.0.CO;2-P},
  bibsource    = {dblp computer science bibliography, https://dblp.org}
}

@article{DBLP:journals/jgt/Guo96,
  author       = {Yubao Guo},
  title        = {Strongly Hamiltonian-connected locally semicomplete digraphs},
  journal      = {J. Graph Theory},
  volume       = {22},
  number       = {1},
  pages        = {65--73},
  year         = {1996},
  url          = {https://doi.org/10.1002/(SICI)1097-0118(199605)22:1\<65::AID-JGT9\>3.0.CO;2-J},
  doi          = {10.1002/(SICI)1097-0118(199605)22:1\<65::AID-JGT9\>3.0.CO;2-J},
  bibsource    = {dblp computer science bibliography, https://dblp.org}
}

@article{DBLP:journals/jgt/GuoV94,
  author       = {Yubao Guo and
                  Lutz Volkmann},
  title        = {Connectivity properties of locally semicomplete digraphs},
  journal      = {J. Graph Theory},
  volume       = {18},
  number       = {3},
  pages        = {269--280},
  year         = {1994},
  url          = {https://doi.org/10.1002/jgt.3190180306},
  doi          = {10.1002/JGT.3190180306},
  bibsource    = {dblp computer science bibliography, https://dblp.org}
}

@article{DBLP:journals/jgt/Bang-JensenH14,
  author       = {J{\o}rgen Bang{-}Jensen and
                  Jing Huang},
  title        = {Arc-Disjoint In- and Out-Branchings With the Same Root in Locally
                  Semicomplete Digraphs},
  journal      = {J. Graph Theory},
  volume       = {77},
  number       = {4},
  pages        = {278--298},
  year         = {2014},
  url          = {https://doi.org/10.1002/jgt.21786},
  doi          = {10.1002/JGT.21786},
  bibsource    = {dblp computer science bibliography, https://dblp.org}
}

@article{DBLP:journals/jct/Bang-JensenH12,
  author       = {J{\o}rgen Bang{-}Jensen and
                  Jing Huang},
  title        = {Decomposing locally semicomplete digraphs into strong spanning subdigraphs},
  journal      = {J. Comb. Theory {B}},
  volume       = {102},
  number       = {3},
  pages        = {701--714},
  year         = {2012},
  url          = {https://doi.org/10.1016/j.jctb.2011.09.001},
  doi          = {10.1016/J.JCTB.2011.09.001},
  bibsource    = {dblp computer science bibliography, https://dblp.org}
}

@article{DBLP:journals/jgt/Bang-JensenM14,
  author       = {J{\o}rgen Bang{-}Jensen and
                  Alessandro Maddaloni},
  title        = {Arc-Disjoint Paths in Decomposable Digraphs},
  journal      = {J. Graph Theory},
  volume       = {77},
  number       = {2},
  pages        = {89--110},
  year         = {2014},
  url          = {https://doi.org/10.1002/jgt.21775},
  doi          = {10.1002/JGT.21775},
  bibsource    = {dblp computer science bibliography, https://dblp.org}
}

@article{DBLP:journals/jgt/Bang-JensenH95,
  author       = {J{\o}rgen Bang{-}Jensen and
                  Jing Huang},
  title        = {Quasi-transitive digraphs},
  journal      = {J. Graph Theory},
  volume       = {20},
  number       = {2},
  pages        = {141--161},
  year         = {1995},
  url          = {https://doi.org/10.1002/jgt.3190200205},
  doi          = {10.1002/JGT.3190200205},
  bibsource    = {dblp computer science bibliography, https://dblp.org}
}

@inproceedings{DBLP:conf/icalp/GhorbaniM26,
  author       = {Ebrahim Ghorbani and
                  Matthias Mnich},
  editor       = {Sayan Bhattacharya and
                  Danupon Nanongkai and
                  Michael Benedikt and
                  Gabriele Puppis},
  title        = {A 9/4-Approximation for Directed Feedback Vertex Sets in Quasi-Transitive
                  Digraphs},
  booktitle    = {53rd International Colloquium on Automata, Languages, and Programming,
                  {ICALP} 2026, Royal Holloway, University of London, Egham, United
                  Kingdom, July 7-10, 2026},
  series       = {LIPIcs},
  volume       = {374},
  pages        = {96:1--96:16},
  publisher    = {Schloss Dagstuhl - Leibniz-Zentrum f{\"{u}}r Informatik},
  year         = {2026},
  url          = {https://doi.org/10.4230/LIPIcs.ICALP.2026.96},
  doi          = {10.4230/LIPICS.ICALP.2026.96},
  bibsource    = {dblp computer science bibliography, https://dblp.org}
}

@article{DBLP:journals/talg/LokshtanovMMPPS21,
  author = {Daniel Lokshtanov and Pranabendu Misra and
            Joydeep Mukherjee and Fahad Panolan and
            Geevarghese Philip and Saket Saurabh},
  title = {2-Approximating Feedback Vertex Set in Tournaments},
  journal = {ACM Transactions on Algorithms},
  volume = {17},
  number = {2},
  pages = {11:1--11:14},
  year = {2021},
  doi = {10.1145/3446969},
  url = {https://doi.org/10.1145/3446969}
}

@article{DBLP:journals/jgt/Skrien82,
  author       = {Dale Skrien},
  title        = {A relationship between triangulated graphs, comparability graphs,
                  proper interval graphs, proper circular-arc graphs, and nested interval
                  graphs},
  journal      = {J. Graph Theory},
  volume       = {6},
  number       = {3},
  pages        = {309--316},
  year         = {1982},
  url          = {https://doi.org/10.1002/jgt.3190060307},
  doi          = {10.1002/JGT.3190060307},
  bibsource    = {dblp computer science bibliography, https://dblp.org}
}

\appendix
\section*{Appendix}
\label{appendix}



\section{Amplifying the Semicomplete Routine}
\label{sc:amplify}

For an integer $k\ge1$, the routine
\textsc{AmplifiedSEM-FVS}$(D,w,k)$ runs $\mathcal S(D,w)$
independently $k$ times and returns a minimum-weight output.

\begin{lemma}
\label{lem:tfvs-amplify}
On an $n$-vertex semicomplete instance $(D,w)$ and an integer $k\ge1$,
the amplified routine always returns an FVS $S$ and
satisfies
$\Pr[w(S)\le2\operatorname{OPT}(D,w)]\ge1-2^{-k}$.
Its running time is $\mathcal O(kn^{17})$.
\end{lemma}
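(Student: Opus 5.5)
The plan is a standard independent-repetition argument on top of Theorem~\ref{thm:semfvs-2approx}. The routine runs $\mathcal S(D,w)$ independently $k$ times, producing outputs $S_1,\ldots,S_k$, and returns $S=\arg\min_j w(S_j)$.

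\emph{Feasibility.} By Theorem~\ref{thm:semfvs-2approx}, every run of $\mathcal S$ outputs a feedback vertex set of $D$, whatever its random choices are. The returned set is one of these outputs, so it is always an FVS.

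\emph{Success probability.} Let $E_j$ be the event that $w(S_j)\le 2\operatorname{OPT}(D,w)$. The instance $(D,w)$ is fixed and the runs use independent randomness, so the events $E_j$ are independent. Theorem~\ref{thm:semfvs-2approx} gives $\Pr[E_j]\ge 1/2$ for each $j$. Hence
\[
\Pr\Bigl[\,\bigcap_{j=1}^k \overline{E_j}\,\Bigr]\le 2^{-k}.
\]
If some $E_j$ occurs, then $w(S)\le w(S_j)\le 2\operatorname{OPT}(D,w)$ because $S$ has minimum weight among the outputs. Therefore $\Pr[w(S)\le 2\operatorname{OPT}(D,w)]\ge 1-2^{-k}$.

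\emph{Running time.} Each run of $\mathcal S$ takes $\mathcal O(n^{17})$ time. Computing the $k$ output weights and selecting the lightest takes $\mathcal O(kn)$ additional time. The total is $\mathcal O(kn^{17})$.

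The only delicate point is the independence of the $E_j$. It must hold for a fixed input, so $w$ must not depend on the runs' randomness. Later uses of this lemma condition on the input to each call, and the paper handles that separately. The other ingredient is Theorem~\ref{thm:semfvs-2approx} itself: if a justification of its local-ratio reduction were needed, I would argue that the $2$-cycle charges are bounded via Lemma~\ref{lem:vc-lr-decomposition}, and the tournament residual via Theorem~\ref{thm:tfvs-talg1}, with the two bounds added as in the proof of Lemma~\ref{lem:pivot-candidate}.
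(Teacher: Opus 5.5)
Your proposal is correct and matches the paper's proof. Every repetition returns an FVS, so the lightest output is always feasible. The lightest output is no heavier than any successful repetition, and all $k$ independent runs fail with probability at most $2^{-k}$. The running time is $k$ calls of $\mathcal O(n^{17})$ each.
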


\begin{proof}
Every repetition returns an FVS, so the lightest output is always
feasible. If at least one repetition satisfy its factor-$2$ guarantee, then
the chosen output is no heavier and satisfies the same guarantee.
By Theorem~\ref{thm:semfvs-2approx}, each independent repetition
fails with probability at most $1/2$, so all $k$ fail with probability
at most $2^{-k}$. Finally, the $k$ calls each take
$\mathcal O(n^{17})$ time, giving the stated running time.
\end{proof}

\end{document}